\pgfplotsset{width=7cm,compat=1.9}
\DeclarePairedDelimiter\ceil{\lceil}{\rceil}
\DeclarePairedDelimiter\floor{\lfloor}{\rfloor}
\newcommand{\eqclass}[2]{[{#1}]_{{#2}}}
\newcommand{\pow}[2]{{#1} \times 10^{{#2}}}
\newcommand{\ourparagraph}[1]{{\smallskip\noindent {\bf {#1}.}}}
\newcommand{\Z}{\mathbb{Z}}
\newcommand{\lr}[2]{\llbracket {#1}; {#2} \rrbracket}
\newcommand{\bT}{\mathbb{T}}
\newcommand{\bS}{\mathbb{S}}
\renewcommand\vec{\mathbf}
\newcommand{\xA}{\vec{x_\mathbb{A}}}
\newcommand{\xT}{\vec{x_\mathbb{T}}}
\newcommand{\xS}{\vec{x_\mathbb{S}}}
\newcommand{\XT}{X_\mathbb{T}}
\newcommand{\XS}{X_\mathbb{S}}
\newcommand{\pT}{\pi_{\bT}}
\newcommand{\pS}{\pi_{\bS}}
\newcommand{\cX}{\mathcal{X}}
\newcommand{\cY}{\mathcal{Y}}
\newcommand{\cZ}{\mathcal{Z}}
\newcommand{\pY}{\pi_{Y}}
\newcommand{\pZ}{\pi_{Z}}
\DeclareMathOperator{\HH}{H}
\DeclareMathOperator{\V}{V}
\newtheorem{lemma}{Lemma}
\newtheorem{theorem}{Theorem}
\newtheorem{corollary}{Corollary}
\newtheorem{recall}{Recall}
\newtheorem{definition}{Definition}
\newtheorem{example}{Example}
\newtheorem{assumption}{Assumption}
\begin{document}

%\title{Scalable Information-Flow Analysis of \\Secure Three-Party Affine Computations}
%\date{}
%\author{}
%
%\maketitle

%%%%%%%%%%%%%%%%%%%%%%%%%%%%%%%%%%%%%%%%%%
\begin{center}
{\Large \bf Scalable Information-Flow Analysis of \\Secure Three-Party Affine Computations}\\

\bigskip
\bigskip
{\small Patrick Ah-Fat and Michael Huth\\
Department of Computing, Imperial College London\\
London, SW7 2AZ, United Kingdom\\
$\{$patrick.ah-fat14, m.huth$\}$@imperial.ac.uk}
\end{center}

\date{\today}

\bigskip

\begin{abstract}
Elaborate protocols in Secure Multi-party Computation enable several participants to compute a public function of their own private inputs while ensuring that no undesired information leaks about the private inputs, and without resorting to any trusted third party. However, the public output of the computation inevitably leaks \emph{some} information about the private inputs. 
Recent works have introduced a framework and proposed some techniques for quantifying such information flow. 
Yet, owing to their complexity, those methods do not scale to practical situations that may involve large input spaces. 
The main contribution of the work reported here is to formally investigate the information flow captured by the min-entropy in the particular case of secure three-party computations of affine functions in order to make its quantification scalable to realistic scenarios. 
To this end, we mathematically derive an explicit formula for this entropy under uniform prior beliefs about the inputs. We show that this closed-form expression can be computed in time constant in the inputs sizes and logarithmic in the coefficients of the affine function. Finally, we formulate some theoretical bounds for this privacy leak in the presence of non-uniform prior beliefs. 
\end{abstract}

\bigskip
\noindent {\bf Keywords:}
Computational Privacy, Min-entropy, Combinatorics. 

\section{Introduction}

Secure Multi-party Computation (SMC) is a domain of cryptography that aims at enabling several parties to compute a public function of their own private inputs, while keeping the inputs secret and without resorting to any trusted third party \cite{yao1986generate,yao1982protocols,shamir1979share,%
rabin1989verifiable,ben1988completeness,chaum1988multiparty}. 
Multi-party secure protocols typically require the parties to engage in a series of rounds of communication in order to exchange their information so as to be able to collaboratively compute the intended output. Such protocols provide the guarantee that none of the parties will be able to infer any information about the other parties' input, other than the information conveyed by the public output itself. 
%no information about each of the private inputs leaks to the other parties, other than the information conveyed by the public output itself. 

Paradoxically, as a function of the inputs, the public output inevitably leaks some information about those private inputs. 
This leakage is considered as an inherent consequence of the primary objective of SMC: it is commonly qualified as the ``acceptable leakage'' and its study is thus largely ignored in the SMC literature \cite{lindell2009secure,orlandi2011multiparty,%
cramer2015secure,aumann2007security}. 
Recent works have been undertaken with the aim of quantifying such information flows \cite{ah2017secure,ah2018optimal,8573818}. 
By adapting techniques from Quantitative Information Flow (QIF) and applying concepts from Information Theory (IT) to the context of SMC, they introduce an attack model and a general notion of entropy that enable us not only to reason about the acceptable leakage in SMC, but also to construct bespoke privacy-enhancing mechanisms aimed at protecting the inputs' secrecy. 
In this attack model, the entropy of a targeted input reflects the amount of information that is gained by an attacker once the output is revealed. 

Although these techniques offer a rich framework designed for analysing information flows in SMC, their computation is essentially combinatorial, and their application in practice is thus impeded by the scalability of computing this combinatorics. 
%these techniques are not scalable to the computations that involve large input spaces. 
Indeed, in the general case, the time complexity of computing such entropy measures is quadratic in the product of the inputs sizes, making them inadequate for examining real world applications of SMC that may involve large input spaces. 
We believe however, that developing techniques that can perform such analyses efficiently would benefit and complement the extensive researches  \cite{kolesnikov2008improved,lindell2015efficient,lindell2012secure,%
araki2016high} 
that are being conducted on efficient SMC protocols: potential participants of an SMC would not only have efficient cryptographic protocols at their disposal, but they could also effectively run privacy analyses in order to precisely estimate the risk that they would run by entering the computation. 

In this paper, our objective is to focus our efforts on a particular class of functions for which we further investigate those analyses in order to make them applicable to arbitrarily large input spaces. More precisely, we focus on secure three-party computations, and we study the class of functions that are affine in the target's and the spectator's inputs, while the amount of information that an attacker gains on a targeted input will be measured by means of conditional min-entropy. 
In this setting, the main contribution of this work is to reduce the combinatorial essence of this information measure to a closed-form expression that has time complexity constant in the inputs sizes, and logarithmic in the coefficients of the affine function. 
More specifically, we show that under uniform prior beliefs, the conditional min-entropy can be reduced to a simple function of the size of the output domain, for which we then derive an explicit expression. 
Finally, as this reduction is valid under uniform prior beliefs on the inputs, we also exhibit some explicit bounds for this information measure in the presence of non-uniform prior beliefs. 

\ourparagraph{Outline of Paper}
We present an intuitive overview of our main contributions and of the key technical aspects of our work in Section \ref{sec:method}. 
We discuss some related works in Section \ref{sec:related}. 
The mathematical formalisation required for analysing information flows in secure three-party affine computations is introduced in Section \ref{sec:if_affine}. In Section \ref{sec:uni}, we show that the information gained by an attacker under uniform prior beliefs is entirely determined by the size of the output domain, for which we derive a closed-form expression. Explicit bounds for the information flow under non-uniform prior beliefs are presented in Section \ref{sec:non_uni}. We illustrate those theoretical results in Section \ref{sec:examples} and Section \ref{sec:conclu} concludes the paper.

\ourparagraph{Notations} 
%First, let us introduce some notations %, relative to discrete probability distributions, 
%that we will need in this work. 
Let $D$ be a discrete set. We denote by $|D|$ the cardinality of set $D$.
Let $\Omega(D)$ be the set of all probability distributions whose support is contained in $D$. 
Throughout, we present distributions as Python dictionaries with domain values as keys and associated probabilities as values. For example, $\{4\colon\frac{1}{2}, 8\colon\frac{1}{2}\}$ represents the uniform distribution over $\{4, 8\}$. 
For any integers $a$ and $b$, we will write $\lr{a}{b}$ for the set of consecutive integers ranging from $a$ to $b$, namely $\{a, a+1, \cdots, b\}$. 
The greatest common divisor of $a$ and $b$ will be denoted as $\gcd(a, b)$. 
The fact that two integers $i$ and $j$ have same residue modulo another integer $k$ will be denoted as $i = j \mod k$. 
%The set of positive integers will be denoted by $\Np$ while $\Rp$ and $\Rpe$ will denote the set of non-negative and positive real numbers respectively. 
%Let $n$ be in $\Np$. 
%A \emph{linear distribution} over $\llrr{1}{n}$ will refer to the triangular distribution with mode $n$, i.e. to the distribution
%$\{k \colon \frac{2k}{n(n+1)} \mid 1 \leq k \leq n\}$
%where $\frac{2}{n(n+1)}$ is a normalising factor. 
Given random variable $X$ and value $x$, the event ``$X = x$'' will be abbreviated by ``$x$'' when there is no ambiguity, and its probability will be denoted by $p(x)$. 
Similarly, we will abbreviate $\sum_{x \in D}$ by $\sum_x$ when the domain $D$ is obvious from context. 
Finally, the logarithm in base $2$ will be denoted as $\log$.

\section{Methodology}
\label{sec:method}

In this section, we present an overview of our main contributions and we highlight the key technical components of our work intuitively. 
Although the aim of this section is to illustrate and summarise our results, the detailed and rigorous approach is developed in Sections \ref{sec:if_affine}, \ref{sec:uni} and \ref{sec:non_uni}. 
This work is motivated by Secure Multi-party Computation, which requires all the manipulated values to belong to finite spaces. Thus, we will focus on integer values ranged in finite intervals. 

The main contribution of this work is to introduce an efficient and scalable way of quantifying the acceptable leakage in three-party affine computations. 
To this effect, we consider the secure computation of a public function $f$ performed on three private inputs $x$, $y$ and $z$. 
We wish to quantify the amount of information that an attacker, who has control of or is being able to eavesdrop on the value of $x$, would gain on input $y$ once the output of $f$ is revealed. 
We focus on the functions $f$ whose output $o$ can, once the input $x$ controlled by the attacker is fixed, be expressed as a function of $y$ and $z$, in its simplest form, as $o = f(y, z) = \beta y + \gamma z$ where $\beta$ and $\gamma$ are constant integers. 
We quantify the information gained by the attacker from this computation by $\HH(Y \mid O)$, the min-entropy of input $y$ given output $o$, considered as random variables. 
When inputs $Y$ and $Z$ are considered as random variables uniformly distributed on some intervals, we show in Section \ref{subsec:reduce} that this entropy can be reduced to an explicit formula involving  $N_O$, the number of possible values that output $O$ can take. 
The main difficulty now resides in deriving a closed-form formula for $N_O$, the focus of Section \ref{subsec:output_domain}, and for which we sketch an intuitive explanation now. 
Given that $Y$ and $Z$ are uniformly distributed on respective intervals $I_Y$ and $I_Z$, we show that those intervals can be assumed to be of the form $I_Y = \lr{0}{n}$ and $I_Z = \lr{0}{m}$ respectively, where $n$ and $m$ are positive integers. 
We also show that a simple simplification enables us to assume that constants $\beta$ and $\gamma$ are positive and coprime. 
The number of outputs $N_O$ can now be expressed as the following cardinal $N_O = |A|$ where we define set $A$ as: 

\[ A =  \{ \beta y + \gamma z \mid (y, z) \in \lr{0}{n} \times \lr{0}{m} \} \]

For the sake of our later explanation, we will define for all $i$ in $\lr{0}{n}$, the set $A_i$ as:

\[ A_i = \{ \beta i + \gamma z \mid z \in \lr{0}{m} \} \]
so that set $A$ can now be expressed as: 
\[A = \bigcup_{i = 0}^n A_i \]

In order to illustrate our method and theorems for computing the cardinal $|A|$, we will construct some graphical representations of set $A$ under different configurations in the following examples. 

\begin{example}
\begin{figure}
\centering
\includegraphics[scale=0.348]{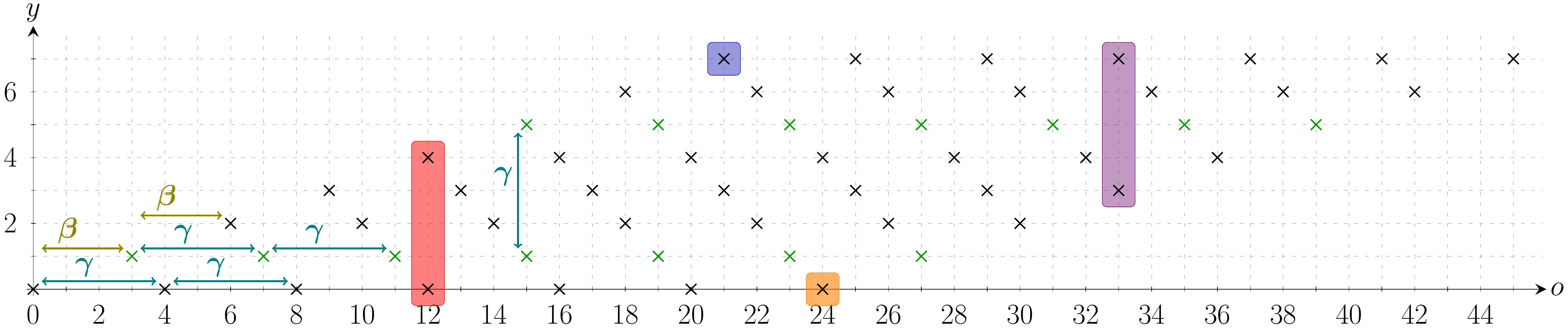}
\caption{Graphical representation of set $A = \{ \beta y + \gamma z \mid (y, z) \in \lr{0}{n} \times \lr{0}{m} \}$ of feasible outputs under parameters $\beta = 3, \gamma = 4, n = 7, m = 6$. }
\label{fig:ex_inter}
\end{figure}

Let $\beta = 3$, $\gamma = 4$, $n = 7$ and $m = 6$. The graphical representation of the corresponding set $A$ is shown in Figure \ref{fig:ex_inter}. The $y$-axis corresponds to the values of $y$, which is ranged in $\lr{0}{7}$, while the $x$-axis corresponds to the possible values that the output can take. 
For each row, indexed by $i \in \lr{0}{7}$, we mark by a cross the possible values that the output $o = \beta i + \gamma z$ can take. In other words, each row $i$ will represent the elements contained in set $A_i$. As $A$ is defined as the union of all the $A_i$, the set $A$ corresponds to the projection of all the crosses onto the $x$-axis. In other words, value $o$ belongs to set $A$ if and only if there is at least one cross in column $o$. 
\end{example}

In order to tally the number of feasible outputs, we will highlight some intuitive results, which we will formalise and prove in Section \ref{subsec:output_domain}. 
We first notice that the number of outputs is upper bounded by $(n+1)(m+1)$, and may be strictly lower than this bound since one column may contain several crosses. We will refer to such columns containing more than $1$ cross as \emph{intersections}. 
We make the following observations:
\begin{enumerate}
\item 
\label{obs:first}
The ``first'' intersection occurs at column $\beta \gamma = 12$ and is highlighted in red in Figure \ref{fig:ex_inter}. In other words, the lowest output $o$ whose column contains at least two crosses is $o=\beta \gamma$. 
\item We indicate in blue the first cross of the last row, indexed at column $\beta n = 21$, and in orange the last cross of the first row, indexed at column $\gamma m = 24$. We notice that the first intersection can occur if and only if both of those crosses, highlighted in blue and orange, do not stand before the column highlighted in red. 
In other words, the first intersection can occur if and only if $\beta n \geq \beta \gamma$ and $\gamma m \geq \beta \gamma$, i.e. if $n \geq \gamma$ and $m \geq \beta$, which we claim in Lemma \ref{lmm:single}. 
If one of those conditions is not satisfied, then there is no intersection, and the number of outputs is $(n+1)(m+1)$, which we claim in Corollary \ref{crllr:injective} and illustrate in the next example. 
\item Set $A$ is ``symmetrical'', i.e. that for all output $o$ in $\lr{0}{\beta n + \gamma m}$, we have:
\[ o \in A \iff \beta n + \gamma m - o \in A \]
where $\beta n + \gamma m$ is the largest output obtained for maximal values of $y$ and $z$. This is proved in Lemma \ref{lmm:symmetric}. 
\item Thus, the last intersection occurs at column $\beta n + \gamma m - \beta \gamma$ and is highlighted in purple. 
Together with observation \ref{obs:first}, this constitutes the content of Lemma \ref{lmm:range_double}. 
Moreover, by symmetry, there is the same number of outputs contained in $\lr{0}{\beta \gamma - 1}$ and $\lr{\beta n + \gamma m - \beta \gamma + 1}{\beta n + \gamma m}$, as claimed in Corollary \ref{crllr:count_last}. 
\item As there is no intersection before the red column $\beta \gamma$, the number of outputs contained in $\lr{0}{\beta \gamma - 1}$ can be obtained by summing the number of elements of all $A_i$ contained in this interval. More formally, we have:
\[ |A \cap \lr{0}{\beta \gamma - 1}| = \sum_{i=0}^n |A_i \cap \lr{0}{\beta \gamma - 1}| \]
This corresponds to the total number of crosses that stand before the column highlighted in red. We develop its computation in Theorem \ref{thm:count_first}. 
\item Finally, we observe that all the columns lying between the red and purple ones, i.e. ranging in the interval $\lr{\beta \gamma}{\beta n + \gamma m - \beta \gamma}$, contain at least one cross. This result is formalised in Theorem \ref{thm:middle_inc} and implies that:
\[ |A \cap \lr{\beta \gamma}{\beta n + \gamma m - \beta \gamma}| = \beta n + \gamma m - 2 \beta \gamma + 1 \]

In order to prove that all such columns contain at least one cross, we make the following reasoning. 
\begin{enumerate}
\item Two sets $A_i$ whose indices are separated by a multiple of $\gamma$ will only contain some outputs that have the same residue modulo $\gamma$. 
More precisely, if $i$ and $j$ are both congruent to some $k$ modulo $\gamma$, then the elements of $A_i$ and $A_j$ will be congruent to $\beta k$ modulo $\gamma$. 
We illustrate this fact in Figure \ref{fig:ex_inter}, where we color in green the elements of sets $A_1$ and $A_5$. We can notice that all those elements are congruent to $1 \beta$ modulo $\gamma$. 
\item For all $k$ in $\lr{0}{\gamma  - 1}$, let us define $B_k$ as the union of all the $A_i$ whose index $i$ is congruent to $k$ modulo $\gamma$:
\[ B_k = \bigcup_{i = k \mod \gamma} A_i \]

For example, $B_1$ can be represented as the projection of all the green crosses on the $x$-axis. 
Then, we can see that each $B_k$ includes all the outputs that are ranged between the red and purple columns and that are congruent to $\beta k$ modulo $\gamma$. 
This observation is formalised in Lemmas \ref{lmm:bj} and \ref{lmm:bj_inc}. 
We can indeed see in the figure that $\{15, 19, 23, 27, 31\} \subseteq A_1 \cup A_5 = B_1$. We notice that this may not be the case outside of the domain delimited by the red and purple columns as for example $47 \notin A_1 \cup A_5$. 
\item Finally, as formally explained in Theorem \ref{thm:middle_inc}, we claim that for all output $o$ ranged between the red and purple columns, there exists a $k$ in $\lr{0}{\gamma - 1}$ such that $o \in B_k$. 
Indeed, if we denote by $r$ the residue of $o$ modulo $\gamma$, it suffices to choose $k = \beta^{-1} r$ to ensure that $\beta k = r \mod \gamma$, which then implies $o \in B_k$. 
While $\beta^{-1}$ refers to the inverse of $\beta$ modulo $\gamma$, such an operation is allowed since $\beta$ and $\gamma$ are coprime. 
For example, output $o=17$ has residue $r=1$ modulo $\gamma=4$. 
In this case, we can choose $k = \beta^{-1}r = 3^{-1}\cdot 1 = 3$ and we can verify that $o \in B_3$. 
This concludes our intuition and ensures that all column ranged between the red and purple one will contain at least one cross. 
\end{enumerate}
\end{enumerate}

Finally, Theorem \ref{thm:lower_bound} and Corollary \ref{crll:lower_z} derive some lower and upper bounds for $\HH(Y \mid O)$ when prior beliefs on the inputs are not uniform. 

\begin{example}
\begin{figure}
\centering
\includegraphics[scale=0.348]{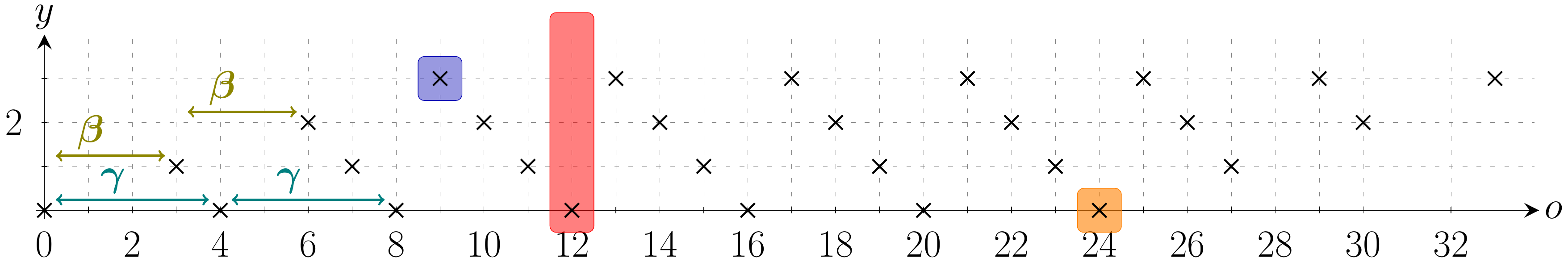}
\caption{Graphical representation of set $A = \{ \beta y + \gamma z \mid (y, z) \in \lr{0}{n} \times \lr{0}{m} \}$ of feasible outputs under parameters $\beta = 3, \gamma = 4, n = 3, m = 6$. }
\label{fig:ex_inter_bis}
\end{figure}

Let $\beta = 3$, $\gamma = 4$, $n = 3$ and $m = 6$. In the graphical representation of the corresponding set $A$ that is displayed in Figure \ref{fig:ex_inter_bis}, we can notice that the blue cell representing $\beta n$ appears before the red column $\beta \gamma$, meaning that the condition $n \geq \gamma \wedge m \geq \beta$ is not satisfied. 
This implies that there is no intersection in this setting and that $|A| = (n+1)(m+1) = 28$. 
\end{example}

\section{Related Works}
\label{sec:related}

In this section, we discuss related work that constitutes the foundations and the motivations of our present work. 

\subsection{Secure Multi-party Computation}

Secure Multi-party Computation  \cite{yao1986generate,yao1982protocols,shamir1979share,%
rabin1989verifiable,ben1988completeness,chaum1988multiparty} is a domain of cryptography that provides advanced protocols which enable several participants to compute a public function of their own private inputs without having to rely on any other trusted third party or any external authority. Those protocols enable the participants to compute a function in a decentralised manner, while ensuring that no information leaks about the private inputs, other than that which can be inferred from the public output. 
The commonly called ``acceptable leakage'' which is further studied in this paper, is the information that can be inferred from an attacker about the other inputs given the knowledge of the public output. 
Secure multi-party computation is not the only domain that is subject to an acceptable leakage. In particular, the results of our work are also applicable to other fields or scenarios that aim at protecting the inputs' privacy and that involve the opening of a public output, such as outsourced computation where a trusted third party is privately sent all the inputs and returns the public output as unique piece of information, or trusted computing where the parties input their secret data into hardware security modules, which then ensure that no unintended information will be accessible to the other parties.

\subsection{Quantitative Information Flow}

The purpose of Quantitative Information Flow (QIF) \cite{smith2009foundations,malacaria2015algebraic} is to provide frameworks and techniques based on information theory and probability theory for measuring the amount of information that leaks from a secret. 
Different mathematical concepts have emerged in order to convey varied and precise information about a secret: Shannon entropy \cite{BLTJ:BLTJ1338} reflects the minimum number of binary questions required to recover a secret on average, while the min-entropy is an indicator of the probability to guess a secret in one try \cite{smith2011quantifying,cachin1997entropy,smith2009foundations}. 
Richer measures such as R\'enyi entropy \cite{renyi1961measures} and the $g$-entropy \cite{m2012measuring} have been introduced in order to quantify some specific properties of a secret, and more general entropies have been proposed in order to unify those different concepts \cite{ah2018optimal,khouzani2016relative}. 
In this work, we will measure the information gained by an attacker by means of min-entropy, which is used extensively in cryptography in order to quantify the vulnerability of a secret.

\subsection{Differential Privacy}

Differential Privacy (DP) \cite{dwork2008differential,dwork2014algorithmic} formalises privacy concerns and introduces techniques that provide users of a database with the assurance that their personal details will not have a significant impact on the output of the queries performed on the database. More precisely, it proposes mechanisms which ensure that the outcome of the queries performed on two databases differing in at most one element will be statistically indistinguishable. Moreover, minimising the distortion of the outcome of the queries while ensuring privacy is an important trade-off that governs DP. 
Although DP is particularly adapted for guaranteeing privacy in statistical computations involving a large number of parties, its effectiveness diminishes when a small number of parties are involved in the computation. 
For example, in a two-party computation, a DP mechanism would ensure that the output would not be sensibly affected when half of the data is changed. In this case, the utility of the computed function is thus be drastically hindered by the low number of parties. 
Unlike DP and other works that have been conducted on trading off privacy and utility in SMC, this work does not intend to enhance the inputs privacy. 
Instead, our objective is to propose an efficient method for quantifying the privacy risks that a certain kind of computations presents. 

\subsection{Information Flow in Secure Multi-party Computation}

Recent works \cite{ah2017secure,ah2018optimal,8573818} have adapted techniques stemming from QIF to the setting of SMC in order to propose a model that allows us to reason about the acceptable leakage. 
In this model, the set of parties willing to compute a public function $f$ is partitioned into three sets: a set of attackers, a set of targets and a set of spectators, holding the respective input vectors $\xA$, $\xT$ and $\xS$. 
The attackers are those parties willing to share the value of their inputs and to take advantage of the public output of the computation $f(\xA, \xT, \xS)$ in order to learn as much information as possible on their targets' inputs, while the remaining parties are called spectators. 
From the point of view of the attackers, the inputs $\xT$ and $\xS$ are unknown values and are thus modelled as random variables $\XT$ and $\XS$, further deemed to be independent since targets and spectators are supposed to be honest parties who provide their inputs without being influenced by any other information. The attackers' prior belief on those inputs will represent the prior distributions $\pT$ and $\pS$ of those random variables. 
The output of the function $f$ is then also considered as a random variable defined as $O = f(\xA, \XT, \XS)$. 
The privacy of the targeted parties is then expressed as the conditional entropy $\HH(\XT \mid \xA, O)$ of the targeted inputs given knowledge of the attackers' inputs and the conditional knowledge of the output. 
The choice of the entropy measure $\HH$ depends on the users' privacy concerns and is left general in \cite{ah2018optimal} to this end. 
In this work and for clarity purposes, we will choose to convey the inputs' privacy by means of min-entropy, as in many cryptographic scenarios, although our analyses can be adapted to more general entropy measures. 
Under this assumption, the privacy of the targeted parties becomes:
\begin{eqnarray}
\HH(\XT \mid \xA, O) &=& - \log \sum_o p(o \mid \xA) \cdot \max_\xT p(\xT \mid \xA, o) \nonumber \\
&=& - \log \sum_o \max_\xT p(\xT) \cdot p(o \mid \xA, \xT) \label{eq:awae_naive}
\end{eqnarray}
by virtue of Bayes theorem. Moreover, as $\XT$ and $\XS$ are independent, we know that: 
\[ p(o \mid \xA, \xT) = \sum_{\substack{\xS \\ f(\xA, \xT, \xS) = o}} p(\xS) \]

If we denote by $n$ and $m$ the size of the domains of $\XT$ and $\XS$ respectively, we know that computing one $p(o \mid \xA, \xT)$ has complexity $\mathcal{O}(m)$ and thus computing each $\max_\xT p(\xT) \cdot p(o \mid \xA, \xT)$ has complexity $\mathcal{O}(nm)$. 
Moreover, in the worst case, i.e. if $f$ is injective, the output domain will have a size of $nm$, which yields an overall complexity in $\mathcal{O}(n^2 m^2)$ for the computation of $\HH(\XT \mid \xA, O)$. 
In conclusion, although recent works have introduced a framework for characterising and quantifying the acceptable leakage, its computation cost is quadratic in the product of the inputs sizes in general, which prevents those privacy analyses to be applicable in practice, and this major complexity issue constitutes the focus of this paper.

\section{Information Flow Analysis in Secure Three-Party Affine Computations}
\label{sec:if_affine}

%	\subsection{Formal Setting}

Let us consider three parties $\cX$, $\cY$ and $\cZ$ holding the respective private inputs $x$, $y$ and $z$. 
Let $f$ be a public function of three variables. 
We assume that the parties wish to enter the secure computation of $f(x, y, z)$ and that $\cX$ is attacking $\cY$ under spectator $\cZ$. 
From the point of view of attacker $\cX$, although $x$ is a known and constant value, the inputs $y$ and $z$ appear as unknown values and will be modelled as random variables $Y$ and $Z$. 
Parties $\cY$ and $\cZ$ are supposed to be honest parties who will not collaborate. Thus, random variables $Y$ and $Z$ are deemed to be independent. 
We further assume that the target's and spectator's inputs are from finite intervals $I_Y$ and $I_Z$:
\[ Y \in I_Y , \quad Z \in I_Z \]

Their prior probability distributions $\pY$ and $\pZ$ will represent the prior beliefs that $\cX$ may have on those values, such that $\pY \in \Omega(I_Y)$ and $\pZ \in \Omega(I_Z)$. We note that the absence of prior belief may be represented as uniform prior distributions. 
Finally, we assume that function $f$ is affine in the target's and spectator's inputs, i.e. that we can choose three constant integers $\alpha$, $\beta$ and $\gamma$ so as to express the output of $f$ as:
\[ f(x, y, z) = \alpha + \beta y + \gamma z \]

Note that constants $\alpha$, $\beta$ and $\gamma$ may be function of input $x$, which is also considered as a constant. 
Admissible candidates for such affine functions $f$ can for example be defined as $f(x, y, z) = 3y + 4z$ or $f(x, y, z) = x^2 + xy + (x^3-2)z$. 

\begin{assumption}
From the attacker's point of view, input $x$ is a known value and will thus be considered as a constant throughout this paper.
\end{assumption}

Thus, we may  abuse notation by omitting the first argument of $f$, we refer to its output $o$ as:
\[ o = f(y, z) = \alpha + \beta y + \gamma z \]
while we define the corresponding random variable $O$ for the output as $O = f(Y, Z) = \alpha +  \beta Y + \gamma Z$. 
We also introduce the output domain $D_O$ as: 
\[D_O = \{ f(y, z) \mid (y, z) \in I_Y \times I_Z\}\]
%\[D_O = \{ f(y, z) \mid \exists (y, z) \in I_Y \times I_Z \colon p(y) > 0 \wedge p(z) > 0 \}\]

By denoting the min-entropy by $\HH$, the amount of information that the attacker gains on the targeted input once the output is revealed will be quantified by $\HH(Y \mid x, O)$. 
Since the value of $x$ will also be considered as a public constant in the present privacy analyses, we will refer to this quantity as $\HH(Y \mid O)$, which develops as:
\[ \HH(Y \mid O) = - \log \V(Y \mid O) \]
where the Bayes vulnerability of $Y$ given $O$ is defined as:
\begin{eqnarray}
\V(Y \mid O) &=& \sum_o p(o) \cdot \max_y p(y \mid o) \label{eq:vuln_dry} \\
&=& \sum_o \max_y p(y) \cdot p(o \mid y) \nonumber
\end{eqnarray}

To conclude this section and in order to simplify the following development, we will examine the particular case when $\beta$ or $\gamma$ is zero. 

\begin{lemma}
\begin{enumerate}
\item If $\beta = 0$ then $\HH(Y \mid O) = \HH(Y)$. 
\item If $\beta \neq 0$ and $\gamma = 0$ then $\HH(Y \mid O) = 0$. 
\end{enumerate}
\end{lemma}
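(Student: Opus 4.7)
The plan is to handle each case by working directly from the definition $\V(Y \mid O) = \sum_o p(o) \cdot \max_y p(y \mid o)$ and then applying $\HH = -\log \V$. Both cases reduce to elementary observations about how $O$ depends on $Y$, so no heavy machinery is needed.

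For part 1, since $\beta = 0$, the output reduces to $O = \alpha + \gamma Z$, which is a function of $Z$ alone. Using the independence of $Y$ and $Z$, I would argue that $Y$ and $O$ are independent, so $p(y \mid o) = p(y)$ for every $o$ in the support of $O$. Substituting into the vulnerability formula gives
\[
\V(Y \mid O) = \sum_o p(o) \cdot \max_y p(y) = \max_y p(y) = \V(Y),
\]
and taking $-\log$ of both sides yields $\HH(Y \mid O) = \HH(Y)$.

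For part 2, with $\beta \neq 0$ and $\gamma = 0$, the output is $O = \alpha + \beta Y$, and the map $y \mapsto \alpha + \beta y$ is injective because $\beta \neq 0$ (over $\Z$). Hence for every $o$ in the support of $O$ there is exactly one $y_o \in I_Y$ with $\alpha + \beta y_o = o$, and $p(y \mid o) = 1$ iff $y = y_o$. Therefore $\max_y p(y \mid o) = 1$ for each such $o$, and
\[
\V(Y \mid O) = \sum_o p(o) \cdot 1 = 1,
\]
which gives $\HH(Y \mid O) = -\log 1 = 0$.

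There is really no substantive obstacle; this lemma is a sanity check that isolates the degenerate coefficients so the rest of the paper can assume $\beta \neq 0$ and $\gamma \neq 0$. The only mild care needed is to restrict the sums over $o$ to the support of $O$ (values with $p(o) > 0$), which is implicit in the definition of $p(y \mid o)$ in Equation~\eqref{eq:vuln_dry}.
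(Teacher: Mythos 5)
Your proof is correct and follows essentially the same route as the paper: independence of $Y$ and $O$ when $\beta = 0$, and injectivity of $y \mapsto \alpha + \beta y$ forcing $\max_y p(y \mid o) = 1$ when $\gamma = 0$. If anything, your write-up is slightly cleaner, since the paper's displayed equation for part 1 ends with ``$1 \cdot \HH(Y)$'' where it plainly means the vulnerability $\V(Y) = \max_y p(y)$, which you state correctly.
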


\begin{proof}
\begin{enumerate}
\item
If $\beta = 0$ then clearly no information about $Y$ leaks from $O$ and thus $\HH(Y \mid O) = \HH(Y)$. More formally, in this case, $Y$ and $O$ are independent and thus Equation (\ref{eq:vuln_dry}) becomes:
\[
\begin{array}{lllll}
\V(Y \mid O) &=& \sum_o p(o) \cdot \max_y p(y) &=& 1 \cdot \HH(Y)
\end{array}
\]
\item 
If $\beta \neq 0$ and $\gamma = 0$ then $Y$ is entirely determined by $O$ given the relation $y = \frac{o - \alpha}{\beta}$ and thus $\HH(Y \mid O) = 0$. More formally, for all $o$ in $D_O$, there exists one $y$ in $I_Y$ such that $p(y \mid o) = 1$ and thus Equation (\ref{eq:vuln_dry}) becomes:
\[
\begin{array}{lllll}
\V(Y \mid O) &=& \sum_o p(o) \cdot 1 &=& 1
\end{array}
\]
\end{enumerate}
\end{proof}

\begin{assumption}
In the rest of the paper, we will assume that $\beta$ and $\gamma$ are non-zero. 
\end{assumption}

\section{Privacy under uniform prior beliefs}
\label{sec:uni}
	\subsection{Reducing the entropy expression}
	\label{subsec:reduce}
In this section, we study the case where the attacker has no prior belief on the target's and spectator's inputs, i.e. when $\pY$ and $\pZ$ are uniform on $I_Y$ and $I_Z$ respectively. In other words, we assume that for all $y$ in $I_Y$ and $z$ in $I_Z$, we have $p(y) = \frac{1}{|I_Y|}$ and $p(z) = \frac{1}{|I_Z|}$. As $\pY$ is uniform, we have:
\begin{eqnarray*}
\V(Y \mid O) &=& \sum_o \max_y p(y) \cdot p(o \mid y) \\
&=& \frac{1}{|I_Y|} \cdot \sum_o \max_y p(o \mid y)
\end{eqnarray*}

However, by definition, we know that for all output $o$ in $D_O$, there exists at least one pair $(y, z)$ in $I_Y \times I_Z$ that satisfies $f(y, z) = o$. For all such pairs, as $Y$ and $Z$ are independent, we have: 
\begin{eqnarray*}
p(o \mid y) &=& \sum_{\substack{z' \in I_Z \\ f(y, z') = o}} p(z') \\
&=& p(z)
\end{eqnarray*}
since for a given $o$ and $y$, there is at most one $z'$ that satisfies $f(y, z') = o$ as $f$ is affine and $\gamma$ is non-zero. Consequently, $p(o \mid y) = \frac{1}{|I_Z|}$ since $\pZ$ is uniform, and thus:
\begin{eqnarray}
\label{eq:vuln_uni}
\V(Y \mid O) &=& \frac{N_O}{|I_Y| \cdot |I_Z|}
\end{eqnarray}
where $N_O$ denotes the cardinal of $D_O$. 
Our aim will now be to compute $N_O$. 

We mention four simplifications before analysing this problem in more details. 

\begin{assumption}
\begin{enumerate}
\item
We first notice that deducting constant $\alpha$ from the output of $f$ does not affect the number of different outputs, which enables us to simplify $f$ as $f(y, z) = \beta y + \gamma z$. 
\item
Now, let us assume that interval $I_Y$ is of the form $I_Y = \lr{a}{b}$. 
By substituting variable $y$ to variable $y' = y - a$, we can rewrite the expression of $f$ as $f(y', z) = \beta a + \beta y' + \gamma z$. 
The new variable $y'$ is ranged in $\lr{0}{b-a}$ and we can again deduct constant $\beta a$ from the output. 
We can perform the same reasoning with variable $z$, which enables us to assume without loss of generality that inputs $y$ and $z$ belong to some intervals of the form $\lr{0}{n}$, $\lr{0}{m}$. 
\item 
We now show that integers $\beta$ and $\gamma$ can be assumed to be positive without loss of generality. 
If both $\beta$ and $\gamma$ are negative, then we can equivalently compute the number of outputs of function $f'(y, z) = - f(y, z)= -\beta y - \gamma z$ which has positive coefficients. 
If $\beta<0$ and $\gamma > 0$, we can write $f$ as $f(y, z) = \beta n - \beta (n -y) + \gamma z$. However, the input space $\lr{0}{n}$ of variable $Y$ is equal to that of $n - Y$, and we can thus equivalently study function $f'(y, z) = - \beta y + \gamma z$ whose coefficients are positive. 
Conversely, if $\beta > 0$ and $\gamma < 0$, we can again equivalently study $f'(y, z) = - f(y, z)$, which is tackled in the previous case. 
\item
Let us denote by $d$ the greatest common divisor of $\beta$ and $\gamma$. We know that $d$ can be computed in $\mathcal{O}(\log(\beta + \gamma))$. Function $f$ can be factorised as $f(y, z) = d \cdot f'(y, z)$ where $f'(y, z)= \beta' y + \gamma' z$ where $\beta'$ and $\gamma'$ are coprime. 
However, functions $f$ and $f'$ have the same number of outputs. 
We can thus assume that the coefficients of the affine function are coprime provided that we have computed their greatest common divisor. 
\end{enumerate}
\end{assumption}

	\subsection{Measuring the size of the output domain}
	\label{subsec:output_domain}

For the sake of clarity, this technical subsection will be developed so as to be self-contained. 

Let $n$, $m$ be two non-negative integers and $\beta$ and $\gamma$ be two positive integers. Let us also assume that $\beta$ and $\gamma$ are coprime. 
The aim is to calculate the cardinal $N_O$ of the set $A$ defined as follows:
\begin{equation*}
A = \{ \beta y + \gamma z \mid (y, z) \in \lr{0}{n} \times \lr{0}{m} \}
\end{equation*}

We can first notice that $|A|$ is positive and upper bounded by $(n+1)(m+1)$. 
The difficulty is that two different pairs $(y, z)$ and $(y', z')$ in $\lr{0}{n} \times \lr{0}{m}$ can satisfy $\beta y + \gamma z = \beta y' + \gamma z'$, and thus $|A|$ will often be lower than $(n+1)(m+1)$. 
We also notice that $A \subseteq \lr{0}{\beta n + \gamma m}$, and thus we also have $|A| \leq \beta n + \gamma m + 1$. 

\ourparagraph{Notations:} For any real $x$, the floor of $x$ will be denoted by $\floor{x}$ while $\ceil{x}$ will denote its ceiling. 
The fact that two integers $i$ and $j$ have same residue modulo another integer $k$ will be denoted as $i = j \mod k$. 
For all integers $i$ and $k$, we will denote the equivalence class of $i$ modulo $k$ by $\eqclass{i}{k} = \{ j \in \Z \mid j = i \mod k \}$. 

\begin{recall}
\label{rcll:floor}
For all real numbers $x$, we have 
\(
\ceil{-x} = - \floor{x}
\).
\end{recall}

\begin{proof}
Let $x$ be a real number. We have:
\[ -x \leq \ceil{-x} < -x +1 \]
and so:
\[ x - 1 < - \ceil{-x} \leq x \]
and thus, as $- \ceil{-x}$ is integral:
\[ - \ceil{-x} = \floor{x} \]
\end{proof}

\begin{recall}
\label{rcll:sum_floor}
Let $p$ and $q$ be two coprime natural numbers. We have:
\begin{equation*}
\sum_{k = 1}^{q-1} \floor{\frac{kp}{q}} = \frac{(p-1)(q-1)}{2}
\end{equation*}
\end{recall}

\begin{proof}
If $q = 1$ then the result is immediate since the sum adds up to $0$. 
Let us now assume that $q > 1$. 
We can notice that we have:
\begin{eqnarray*}
\sum_{k = 1}^{q-1} \floor{\frac{kp}{q}} 
&=& \sum_{k = 1}^{q-1} \floor{\frac{qp - kp}{q}} \\
&=& \sum_{k = 1}^{q-1} \floor{p - \frac{kp}{q}} \\
&=& (q-1)p + \sum_{k = 1}^{q-1} \floor{- \frac{kp}{q}}
\end{eqnarray*}
and thus:
\begin{equation}
\label{eq:sumfloor}
2 \sum_{k = 1}^{q-1} \floor{\frac{kp}{q}} 
= (q-1)p + \sum_{k = 1}^{q-1} 
\left( \floor{\frac{kp}{q}} + \floor{- \frac{kp}{q}} \right)
\end{equation}
However, because $p$ and $q$ are coprime we know that for all $k$ in $\lr{1}{q-1}$, we have $gcd(k, q) = 1$. 
As $q > 1$, we thus know that for all $k$ in $\lr{1}{q-1}$, we have $\frac{kq}{p} \notin \Z$ and thus:
\begin{eqnarray*}
\sum_{k = 1}^{q-1} \left( \floor{\frac{kp}{q}} + \floor{- \frac{kp}{q}} \right)
&=& \sum_{k = 1}^{q-1} (-1) \\
&=& (q-1)(-1)
\end{eqnarray*}
and thus Equation (\ref{eq:sumfloor}) becomes:
\begin{eqnarray*}
2 \sum_{k = 1}^{q-1} \floor{\frac{kp}{q}} 
&=& (q-1)p + (q-1)(-1) \\
&=& (q-1)(p-1)
\end{eqnarray*}
\end{proof}

\begin{recall}
\label{rcll:gengroup}
Let $p$ and $q$ be two coprime natural numbers. We have:
\begin{equation*}
\bigcup_{j \in \lr{0}{q - 1}} \eqclass{pj}{q} = \Z
\end{equation*}
\end{recall}

\begin{proof}
Let $i$ and $j$ be in $\lr{0}{q-1}$. We have:
\begin{eqnarray*}
\eqclass{pi}{q} = \eqclass{pj}{q}
&\iff& p(i-j) = 0 \mod q \\
&\iff& i = j \mod q
\end{eqnarray*}
since $p$ and $q$ are coprime. 
Thus, for all distinct $i$ and $j$ in $\lr{0}{q-1}$, we have $\eqclass{pi}{q} \neq \eqclass{pj}{q}$ and thus $|\{ pj \mod q \mid j \in \lr{0}{q-1} \}| = q$ and therefore:
\begin{eqnarray*}
\bigcup_{j \in \lr{0}{q - 1}} \eqclass{pj}{q} 
&=& \bigcup_{j \in \lr{0}{q - 1}} \eqclass{j}{q}  \\
&=& \Z
\end{eqnarray*}
\end{proof}

\begin{lemma}
\label{lmm:equi}
Let $(y, z)$ and $(y', z')$ be in $\lr{0}{n}\times \lr{0}{m}$. We have:
\begin{equation}
\beta y + \gamma z = \beta y' + \gamma z'
\iff
\exists k \in \Z \colon
\left(
y' = y + k \gamma
\wedge
z' = z - k \beta
\right)
\end{equation}
\end{lemma}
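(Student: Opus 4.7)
The plan is to prove the two directions of the equivalence separately, with the forward direction relying crucially on the assumption (standing in this section) that $\beta$ and $\gamma$ are coprime.

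For the easy direction ($\Leftarrow$), I would assume $y' = y + k\gamma$ and $z' = z - k\beta$ for some integer $k$, and simply expand $\beta y' + \gamma z' = \beta(y + k\gamma) + \gamma(z - k\beta)$. The two cross terms $\beta k \gamma$ and $-\gamma k \beta$ cancel, leaving $\beta y + \gamma z$, as required. No use of the coprimality or of the range $\lr{0}{n}\times\lr{0}{m}$ is needed here.

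For the forward direction ($\Rightarrow$), I would start from $\beta y + \gamma z = \beta y' + \gamma z'$ and rearrange as $\beta(y' - y) = \gamma(z - z')$. Then the key step is a classical Gauss-type divisibility argument: since $\gamma$ divides the right-hand side, $\gamma$ divides $\beta(y' - y)$, and because $\gcd(\beta, \gamma) = 1$ we conclude $\gamma \mid (y' - y)$. Writing $y' - y = k\gamma$ for some $k \in \Z$ and substituting back yields $\beta k \gamma = \gamma(z - z')$, which, since $\gamma \neq 0$, simplifies to $z - z' = k\beta$, i.e. $z' = z - k\beta$.

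The main (very mild) obstacle is simply to invoke coprimality correctly; the ranges of $y,y',z,z'$ play no role in the proof itself (they only constrain which values of $k$ can actually occur, which is irrelevant for the stated equivalence). I do not anticipate any further technical difficulty.
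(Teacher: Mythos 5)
Your proposal is correct and follows essentially the same route as the paper: both hinge on the Gauss/Euclid divisibility step that coprimality of $\beta$ and $\gamma$ turns $\gamma \mid \beta(y'-y)$ into $\gamma \mid (y'-y)$, with the converse being direct substitution. The only cosmetic difference is that the paper writes this as a single chain of equivalences while you separate the two implications.
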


\begin{proof}
Let $(y, z)$ and $(y', z')$ be in $\lr{0}{n} \times \lr{0}{m}$. As $\beta$ and $\gamma$ are coprime, we have:
\begin{eqnarray*}
\beta y + \gamma z = \beta y' + \gamma z'
&\iff&
\beta (y' - y) = \gamma (z - z')
\\ &\iff&
\left(
\exists k \in \Z \colon 
y' - y = k \gamma
\right)
\wedge
\left(
\beta (y' - y) = \gamma (z - z')
\right)
\\ &\iff&
\exists k \in \Z \colon
\left(
y' = y + k \gamma
\wedge
z' = z - k \beta
\right)
\end{eqnarray*}
\end{proof}

\begin{lemma}
\label{lmm:single}
Let $(y, z)$ and $(y', z')$ be two distinct pairs in $\lr{0}{n} \times \lr{0}{m}$. 
We have:
\begin{equation}
\beta y + \gamma z = \beta y' + \gamma z'
\implies
n \geq \gamma \wedge m \geq \beta
\end{equation}
\end{lemma}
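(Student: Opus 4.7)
The plan is to invoke Lemma \ref{lmm:equi} directly, which characterizes equality $\beta y + \gamma z = \beta y' + \gamma z'$ by the existence of an integer $k$ with $y' = y + k\gamma$ and $z' = z - k\beta$. Since the pairs $(y,z)$ and $(y',z')$ are assumed distinct and since $\beta, \gamma$ are both nonzero (by our running assumption), the integer $k$ cannot be $0$: if $k = 0$ then $y' = y$ and $z' = z$, contradicting distinctness.

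Next, I would observe that the roles of $(y,z)$ and $(y',z')$ are symmetric in the hypothesis, so without loss of generality I may assume $k \geq 1$ (replacing $k$ by $-k$ amounts to swapping the two pairs). Under this assumption, since $y \geq 0$ and $\gamma \geq 1$, we have
\[ y' = y + k\gamma \geq k\gamma \geq \gamma, \]
and since $y' \leq n$ by the domain constraint, this forces $n \geq \gamma$. Analogously, since $z' \geq 0$ and $\beta \geq 1$, we have
\[ z = z' + k\beta \geq k\beta \geq \beta, \]
and since $z \leq m$, this forces $m \geq \beta$.

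There is no real obstacle here; the lemma is a short consequence of Lemma \ref{lmm:equi} together with the positivity of $\beta$ and $\gamma$ and the fact that both pairs lie in the nonnegative rectangle $\lr{0}{n} \times \lr{0}{m}$. The only subtlety worth flagging explicitly is the sign handling for $k$, which is resolved by the symmetry of the statement in the two pairs.
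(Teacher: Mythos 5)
Your proof is correct and follows essentially the same route as the paper: invoke Lemma \ref{lmm:equi}, note that distinctness forces $k \neq 0$, and combine the resulting separation of $\gamma$ in the $y$-coordinates (resp.\ $\beta$ in the $z$-coordinates) with the domain bounds. The paper phrases this via $|y'-y| \geq \gamma \leq n$ and $|z'-z| \geq \beta \leq m$ rather than your WLOG on the sign of $k$, but this is only a cosmetic difference.
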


\begin{proof}
Let $(y, z)$ and $(y', z')$ be two distinct pairs in $\lr{0}{n} \times \lr{0}{m}$ such that:
\begin{equation*}
\beta y + \gamma z = \beta y' + \gamma z'
\end{equation*}

By virtue of Lemma \ref{lmm:equi}, we can take $k$ in $\Z$ such that:
\[
y' - y = k \gamma
\wedge
z - z' = k \beta
\]

As $(y, z)$ and $(y', z')$ are different, we further know that $k$ is different from $0$. 
This implies that:
\[
|y' - y| \geq \gamma \wedge |z'-z| \geq \beta
\]

But as $y$ and $y'$ belong to $\lr{0}{n}$ and $z$ and $z'$ belong to $\lr{0}{m}$, we also know that:
\[ |y' - y| \leq n \wedge |z'-z| \leq m \]
and thus:
\[ n \geq \gamma \wedge m \geq \beta \]
\end{proof}

\begin{corollary}
\label{crllr:injective}
If $n < \gamma \vee m < \beta$, then $|A| = (n+1)(m+1)$. 
\end{corollary}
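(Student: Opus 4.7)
The plan is to derive this directly as the contrapositive of Lemma \ref{lmm:single}. First, I would consider the natural map $\phi \colon \lr{0}{n} \times \lr{0}{m} \to A$ defined by $\phi(y, z) = \beta y + \gamma z$. By the very definition of $A$, this map is surjective, so $|A| \leq (n+1)(m+1)$ always holds, and equality will follow once injectivity of $\phi$ is established.

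Next, I would assume the hypothesis $n < \gamma \vee m < \beta$ and invoke the contrapositive of Lemma \ref{lmm:single}: if the conclusion $n \geq \gamma \wedge m \geq \beta$ fails, then no two distinct pairs $(y,z), (y', z')$ in $\lr{0}{n} \times \lr{0}{m}$ can satisfy $\beta y + \gamma z = \beta y' + \gamma z'$. This is exactly the statement that $\phi$ is injective on $\lr{0}{n} \times \lr{0}{m}$.

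Combining surjectivity with injectivity gives that $\phi$ is a bijection onto $A$, so $|A| = |\lr{0}{n} \times \lr{0}{m}| = (n+1)(m+1)$, as claimed.

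There is no real obstacle here: the corollary is essentially a repackaging of Lemma \ref{lmm:single} into the language of cardinalities, and the only step worth spelling out is that the upper bound $(n+1)(m+1)$ from the surjective map meets the lower bound forced by injectivity. I would keep the proof to a couple of lines.
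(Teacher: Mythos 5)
Your proof is correct and follows essentially the same route as the paper: the paper likewise defines the map $g(y,z) = \beta y + \gamma z$, deduces its injectivity from Lemma \ref{lmm:single}, and concludes $|A| = (n+1)(m+1)$ since $A$ is its image. Your explicit mention of surjectivity onto $A$ is just making precise what the paper leaves implicit.
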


\begin{proof}
Let us assume that $n < \gamma \vee m < \beta$. 
Let us define the function $g$ as \( g\colon (y, z) \longmapsto \beta y + \gamma z \) with domain $\lr{0}{n} \times \lr{0}{m} \longrightarrow \lr{0}{\beta n + \gamma m}$. 
By virtue of Lemma \ref{lmm:single}, we know that the function $g$ is injective. 
Thus, we have:
\[
|A| = |g(\lr{0}{n}\times \lr{0}{m})| = (n+1)(m+1)
\]
\end{proof}

\begin{assumption}
In the remainder of this section, we will now assume that $n \geq \gamma \wedge m > \beta$. 
\end{assumption}

\begin{lemma}
\label{lmm:range_double}
Let $o$ be in $A$. 
Let $(y, z)$ and $(y', z')$ be two distinct pairs in $\lr{0}{n}\times \lr{0}{m}$. 
We have:

\begin{equation}
\beta y + \gamma z = o \wedge \beta y' + \gamma z' = o
\implies
o \in \lr{\beta \gamma}{\beta n + \gamma m - \beta \gamma}
\end{equation}
\end{lemma}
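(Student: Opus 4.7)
\textbf{Proof plan for Lemma \ref{lmm:range_double}.}

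The plan is to apply Lemma \ref{lmm:equi} to the two distinct representations of $o$, then push the resulting constraints to both endpoints of the interval $\lr{\beta \gamma}{\beta n + \gamma m - \beta \gamma}$.

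First I would invoke Lemma \ref{lmm:equi} on the distinct pairs $(y,z)$ and $(y', z')$: there exists $k \in \Z$ with $y' = y + k\gamma$ and $z' = z - k\beta$, and since the two pairs are distinct and $\beta, \gamma > 0$, this $k$ must be nonzero. By swapping the roles of $(y,z)$ and $(y',z')$ if necessary, I may assume without loss of generality that $k \geq 1$.

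Next I would extract four inequalities by combining $k \geq 1$ with the constraints $y, y' \in \lr{0}{n}$ and $z, z' \in \lr{0}{m}$. From $y' = y + k\gamma \leq n$ and $k \geq 1$, I get $y \leq n - \gamma$ and $y' \geq \gamma$. Symmetrically, from $z = z' + k\beta \leq m$ and $k \geq 1$, I get $z' \leq m - \beta$ and $z \geq \beta$.

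Finally I would plug these bounds into the two expressions for $o$. Using $y' \geq \gamma$ and $z' \geq 0$ gives the lower bound
\[
o = \beta y' + \gamma z' \geq \beta \gamma,
\]
and using $y \leq n - \gamma$ and $z \leq m$ gives the upper bound
\[
o = \beta y + \gamma z \leq \beta(n-\gamma) + \gamma m = \beta n + \gamma m - \beta \gamma,
\]
so $o \in \lr{\beta\gamma}{\beta n + \gamma m - \beta \gamma}$ as required.

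There is no real obstacle here: the entire argument is a direct consequence of Lemma \ref{lmm:equi} together with the range restrictions. The only subtle step is recognising that the sign of $k$ can be normalised by swapping the two pairs, since the lower bound naturally comes from the pair with larger $y$-coordinate while the upper bound comes from the pair with smaller $y$-coordinate.
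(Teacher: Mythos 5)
Your proposal is correct and follows essentially the same route as the paper: apply Lemma \ref{lmm:equi}, normalise the sign of $k$ so that $k \geq 1$, and read off the two endpoint bounds from the range constraints (the only cosmetic difference is that the paper derives the upper bound from the primed pair via $y' \leq n$ and $z' \leq m - \beta$, whereas you use the unprimed pair via $y \leq n - \gamma$ and $z \leq m$, which is trivially equivalent).
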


\begin{proof}
Let $o$ be in $A$. 
Let $(y, z)$ and $(y', z')$ be two distinct pairs in $\lr{0}{n} \times \lr{0}{m}$ such that:
\[
\beta y + \gamma z = o \wedge \beta y' + \gamma z' = o
\]

By virtue of Lemma \ref{lmm:equi}, we can take $k$ in $\Z$ such that:
\begin{equation}
\label{eq:twice}
y' = y + k \gamma
\wedge
z' = z - k \beta
\end{equation}

We further know that both pairs are distinct and we can thus choose $k$ different from $0$. Without loss of generality, let us assume that $(y,z) >_2 (y', z')$ where $>_2$ refers to the lexicographic order on integer pairs. In other words, let us assume that $k>0$. 

We know that $z' \geq 0$ and Equation (\ref{eq:twice}) ensures that $y' \geq \gamma$ since $k>0$. 
As $o = \beta y' + \gamma z'$, we thus have $o \geq \beta \gamma$. 

Conversely, we know that $y' \leq n$ and Equation (\ref{eq:twice}) ensures that $z' \leq m - \beta$ since $k>0$. 
As $o = \beta y' + \gamma z'$, we thus have:
\begin{eqnarray*}
o &\leq& \beta n + \gamma (m - \beta) \\
&\leq& \beta n+ \gamma m - \beta \gamma
\end{eqnarray*}
\end{proof}

\begin{theorem}
\label{thm:count_first}
We have:
\begin{equation}
|A \cap \lr{0}{\beta \gamma - 1}| = \beta \gamma - \frac{(\beta - 1)(\gamma -1)}{2}
\end{equation}
\end{theorem}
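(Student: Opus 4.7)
The plan is to exploit Lemma \ref{lmm:range_double}, which guarantees that every $o$ in $\lr{0}{\beta\gamma - 1}$ has a unique preimage under $(y,z)\mapsto\beta y+\gamma z$. Consequently, the sets $A_i\cap\lr{0}{\beta\gamma-1}$ are pairwise disjoint across $i\in\lr{0}{n}$, so
\[ |A\cap\lr{0}{\beta\gamma-1}| = \sum_{i=0}^{n}|A_i\cap\lr{0}{\beta\gamma-1}|. \]
Reducing the theorem to counting each summand individually is the first step.

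Next I would dispatch the easy indices and reduce to a clean sum. For $i\geq\gamma$, $\min A_i=\beta i\geq\beta\gamma$, so these contribute nothing. For $i\in\lr{0}{\gamma-1}$, the admissible $z$ are those with $0\leq z\leq m$ satisfying $\beta i+\gamma z\leq\beta\gamma-1$; the latter forces $z\leq \beta-1\leq m$ (using the standing assumption $m\geq\beta$), so the constraint $z\leq m$ is inactive and the count is $\floor{(\beta(\gamma-i)-1)/\gamma}+1$.

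Now I would reindex with $j=\gamma-i$, so $j$ runs over $\lr{1}{\gamma}$. For $j\in\lr{1}{\gamma-1}$, coprimality of $\beta$ and $\gamma$ gives $\gamma\nmid\beta j$, hence $\floor{(\beta j-1)/\gamma}=\floor{\beta j/\gamma}$. The endpoint $j=\gamma$ must be treated separately and contributes exactly $\beta$ (since $\floor{(\beta\gamma-1)/\gamma}+1=\beta$). Summing,
\[ |A\cap\lr{0}{\beta\gamma-1}| = \beta + \sum_{j=1}^{\gamma-1}\left(\floor{\frac{\beta j}{\gamma}}+1\right) = \beta + (\gamma-1) + \sum_{j=1}^{\gamma-1}\floor{\frac{\beta j}{\gamma}}. \]

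Finally, Recall \ref{rcll:sum_floor} evaluates the remaining sum as $(\beta-1)(\gamma-1)/2$, and a short algebraic check shows $\beta+\gamma-1+(\beta-1)(\gamma-1)/2 = \beta\gamma-(\beta-1)(\gamma-1)/2$, which is the claimed value. No single step is a serious obstacle; the main care is handling the boundary case $j=\gamma$, where coprimality fails and the floor identity does not apply, and verifying that the $m$-bound is indeed never binding under $m\geq\beta$.
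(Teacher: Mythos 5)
Your proof is correct and follows essentially the same route as the paper: both use Lemma \ref{lmm:range_double} to reduce the count to a disjoint sum over $y \in \lr{0}{\gamma-1}$ of the admissible $z$ (with the $z \leq m$ constraint slack because $m \geq \beta$), and both finish with Recall \ref{rcll:sum_floor}. The only difference is bookkeeping: the paper expresses the inner count as $\ceil{\beta(\gamma-y)/\gamma} = \beta - \floor{\beta y/\gamma}$ via Recall \ref{rcll:floor}, whereas you write it as $\floor{(\beta(\gamma-i)-1)/\gamma}+1$ and reindex, handling the $j=\gamma$ endpoint separately.
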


\begin{proof}
By virtue of Lemma \ref{lmm:range_double}, we know that for all $o$ in $\lr{0}{\beta \gamma - 1}$, and for all pairs $(y, z)$ and $(y', z')$ in $\lr{0}{n}^2$, we have:
\[ \beta y + \gamma z = o \wedge \beta y' + \gamma z' = o 
\implies 
(y,z) = (y', z')
\]

Thus:
\begin{eqnarray*}
|A \cap \lr{0}{\beta \gamma - 1}| 
&=& | \{ (y, z) \in \lr{0}{n} \times \lr{0}{m} \mid \beta y + \gamma z \in \lr{0}{\beta \gamma - 1} \} | \\
&=& \sum_{y = 0}^{\gamma - 1} |\{ z \in \lr{0}{m} \mid \beta y + \gamma z \in \lr{0}{\beta \gamma - 1} \}|
\end{eqnarray*}
since $\beta y + \gamma z \geq \beta \gamma$ for all $z$ in $\lr{0}{n}$ when $y \geq \gamma$, and since $n \geq \gamma$. 
Moreover, since $m \geq \beta$, for all $y$ in $\lr{0}{\gamma - 1}$, we have:
\begin{eqnarray*}
|\{ z \in \lr{0}{m} \mid \beta y + \gamma z \in \lr{0}{\beta \gamma - 1} \}| 
&=& |\{ z \in \lr{0}{m} \mid \gamma z \in \lr{0}{\beta \gamma - \beta y - 1} \}| \\
&=& \ceil{\frac{\beta(\gamma - y)}{\gamma}} \\
&=& \ceil{\beta - \frac{\beta y}{\gamma}} \\
&=& \beta - \floor{\frac{\beta y}{\gamma}}
\end{eqnarray*}
by virtue of Recall \ref{rcll:floor}. 
We thus have:
\begin{eqnarray*}
|A \cap \lr{0}{\beta \gamma - 1}| 
&=& \sum_{y = 0}^{\gamma - 1} \left( \beta - \floor{\frac{\beta y}{\gamma}} \right) \\
&=& \beta \gamma - \sum_{y = 0}^{\gamma - 1} \floor{\frac{\beta y}{\gamma}} \\
&=& \beta \gamma - \frac{(\beta - 1)(\gamma -1)}{2}
\end{eqnarray*}
by virtue of Recall \ref{rcll:sum_floor} since $\beta$ and $\gamma$ are coprime. 
\end{proof}

\begin{lemma}
\label{lmm:symmetric}
Let $o$ be in $\lr{0}{\beta n + \gamma m}$. We have:
\begin{equation}
o \in A \iff \beta n + \gamma m - o \in A
\end{equation}
\end{lemma}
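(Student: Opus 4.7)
The plan is to exploit the obvious involution $(y, z) \mapsto (n - y, m - z)$ on the rectangle $\lr{0}{n} \times \lr{0}{m}$, which swaps the two extreme corners and sends any interior pair to its ``opposite'' pair. The whole lemma reduces to checking that this involution behaves correctly with respect to the linear form $\beta y + \gamma z$.

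First I would observe that the map $\sigma \colon (y, z) \mapsto (n - y, m - z)$ is a bijection from $\lr{0}{n} \times \lr{0}{m}$ onto itself, since both coordinates remain in their respective ranges and $\sigma \circ \sigma = \mathrm{id}$. Next, a one-line algebraic computation shows that for every $(y, z) \in \lr{0}{n} \times \lr{0}{m}$,
\[
\beta (n - y) + \gamma (m - z) = (\beta n + \gamma m) - (\beta y + \gamma z).
\]

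From here the equivalence is immediate. For the forward direction, if $o \in A$, pick $(y, z) \in \lr{0}{n} \times \lr{0}{m}$ with $\beta y + \gamma z = o$; then $\sigma(y, z)$ lies in the same rectangle and witnesses that $\beta n + \gamma m - o \in A$. The converse is obtained by applying the same argument to $o' = \beta n + \gamma m - o$ and using $\sigma \circ \sigma = \mathrm{id}$ to recover $o$.

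There is no real obstacle in this proof; the only thing to be careful about is not to assume $o \leq \beta n + \gamma m$ implicitly on the wrong side of the equivalence, but since the hypothesis $o \in \lr{0}{\beta n + \gamma m}$ is symmetric under $o \mapsto \beta n + \gamma m - o$, this is automatic.
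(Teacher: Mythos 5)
Your proof is correct and follows essentially the same route as the paper: both rest on the identity $\beta(n-y)+\gamma(m-z) = \beta n + \gamma m - (\beta y + \gamma z)$ together with the fact that $(y,z)\mapsto(n-y,m-z)$ is an involution of $\lr{0}{n}\times\lr{0}{m}$. If anything, yours is slightly cleaner, since by taking the witness $(y,z)$ of $o\in A$ directly from the rectangle you avoid the paper's detour through coprimality (used there to first write $o=\beta y+\gamma z$ over $\Z$), a step that is not actually needed for this lemma.
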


\begin{proof}
Let $o$ be in $\lr{0}{\beta n + \gamma m}$. 
We know that $\beta$ and $\gamma$ are coprime, so we can take two integers $y$ and $z$ in $\Z$ such that:
\[ o = \beta y + \gamma z \]
and we have:
\[ \beta n + \gamma m - o = \beta (n-y) + \gamma (m-z) \]

Now, we have:
\[ (y, z) \in \lr{0}{n} \times \lr{0}{m} \iff (n-y, m-z) \in \lr{0}{n} \times \lr{0}{m} \]
and thus:
\[ o \in A \iff \beta n + \gamma m - o \in A \]
\end{proof}

\begin{corollary}
\label{crllr:count_last}
We have:
\begin{equation}
|A \cap \lr{\beta n + \gamma m - \beta \gamma + 1}{\beta n + \gamma m}| = \beta \gamma - \frac{(\beta - 1)(\gamma -1)}{2}
\end{equation}
\end{corollary}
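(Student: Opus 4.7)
The plan is to exploit the symmetry of $A$ established in Lemma \ref{lmm:symmetric} to reduce the corollary to Theorem \ref{thm:count_first}. Specifically, I would define the involution $\sigma \colon \lr{0}{\beta n + \gamma m} \to \lr{0}{\beta n + \gamma m}$ by $\sigma(o) = \beta n + \gamma m - o$, and observe that Lemma \ref{lmm:symmetric} says exactly that $\sigma$ restricts to a bijection $A \to A$.

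Next, I would check that $\sigma$ sends the interval $\lr{0}{\beta\gamma - 1}$ bijectively onto $\lr{\beta n + \gamma m - \beta \gamma + 1}{\beta n + \gamma m}$. This is just the elementary fact that $o \leq \beta\gamma - 1$ if and only if $\sigma(o) \geq \beta n + \gamma m - \beta \gamma + 1$, combined with $\sigma$ being an involution. Consequently,
\[
|A \cap \lr{\beta n + \gamma m - \beta \gamma + 1}{\beta n + \gamma m}| = |A \cap \lr{0}{\beta\gamma - 1}|.
\]

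Finally, applying Theorem \ref{thm:count_first} to the right-hand side yields the desired equality $\beta \gamma - \frac{(\beta - 1)(\gamma - 1)}{2}$. There is no substantial obstacle here: the entire argument is a one-line application of symmetry once the appropriate bijection is identified, and the bulk of the work has already been done in proving Lemma \ref{lmm:symmetric} and Theorem \ref{thm:count_first}. The only minor care required is the bookkeeping on the interval endpoints to ensure $\sigma$ really does exchange the two intervals correctly.
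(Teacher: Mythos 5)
Your proof is correct and follows exactly the route the paper intends: the paper's own proof simply states that the corollary is an immediate consequence of Theorem \ref{thm:count_first} and Lemma \ref{lmm:symmetric}, and your involution $\sigma(o) = \beta n + \gamma m - o$ is precisely the symmetry that argument relies on. You have merely made the endpoint bookkeeping explicit, which the paper leaves to the reader.
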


\begin{proof}
This is an immediate consequence of Theorem \ref{thm:count_first} and Lemma \ref{lmm:symmetric}. 
\end{proof}

\begin{definition}
\label{def:ab}
\begin{enumerate}
\item
For all $i$ in $\lr{0}{n}$, we define:
\begin{equation}
A_i = \{ \beta i + \gamma z \mid z \in \lr{0}{m} \}
\end{equation}
so that $A = \bigcup_{i=0}^{n} A_i$. 

\item
For all $j$ in $\lr{0}{\gamma - 1}$, we define:
\begin{eqnarray*}
B_j &=& \bigcup_{\substack{i \in \lr{0}{n} \\ i = j \mod \gamma}} A_i \\
&=& \bigcup_{k = 0}^{\floor{\frac{n-j}{\gamma}}} A_{j + \gamma k}
\end{eqnarray*}
so that $A$ can be rewritten: 
\[A = \bigcup_{j \in \lr{0}{\gamma - 1}} B_j\]
\end{enumerate}
\end{definition}

\begin{lemma}
\label{lmm:bj}
For all $j$ in $\lr{0}{\gamma - 1}$, we have:
\begin{equation}
B_j = \eqclass{\beta j}{\gamma} \cap \lr{\beta j}{\beta(j + \gamma \floor{\frac{n-j}{\gamma}})  + \gamma m }
\end{equation}
\end{lemma}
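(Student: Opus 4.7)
Write $K = \floor{\frac{n-j}{\gamma}}$, so that by Definition \ref{def:ab} we have $B_j = \bigcup_{k=0}^{K} A_{j+\gamma k}$. The plan is to prove the two inclusions of the claimed equality separately. The forward inclusion $B_j \subseteq \eqclass{\beta j}{\gamma} \cap \lr{\beta j}{\beta(j+\gamma K)+\gamma m}$ amounts to unpacking the definition: any element of $A_{j+\gamma k}$ has the shape $\beta(j+\gamma k) + \gamma z = \beta j + \gamma(\beta k + z)$ for some $k \in \lr{0}{K}$ and $z \in \lr{0}{m}$, hence is congruent to $\beta j$ modulo $\gamma$, is at least $\beta j$ (attained when $k=z=0$), and is at most $\beta(j+\gamma K)+\gamma m$ (attained when $k=K$ and $z=m$).

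For the reverse inclusion, I would start with an arbitrary $o$ in the right-hand side. Since $o$ is congruent to $\beta j$ modulo $\gamma$ and $o \geq \beta j$, I can write $o = \beta j + \gamma t$ with $t$ a non-negative integer, and the upper bound on $o$ then translates to $t \leq \beta K + m$. It then suffices to exhibit a pair $(k, z) \in \lr{0}{K} \times \lr{0}{m}$ with $t = \beta k + z$, since this gives $o = \beta(j+\gamma k) + \gamma z \in A_{j+\gamma k} \subseteq B_j$. The natural choice is $k = \min(K, \floor{t/\beta})$ with $z = t - \beta k$. When $\floor{t/\beta} \leq K$ this yields $k = \floor{t/\beta}$ and $z \in \lr{0}{\beta - 1}$, which lies in $\lr{0}{m}$ thanks to the standing hypothesis $m > \beta$; otherwise $\floor{t/\beta} > K$, so $k = K$ and $z = t - \beta K \geq \beta > 0$, while $z \leq m$ is immediate from $t \leq \beta K + m$.

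The hard part is really this reverse inclusion, and the key point is that consecutive slices $A_{j+\gamma k}$ must tile the intersection of the congruence class and the claimed interval without leaving any holes. The crucial inequality is $m \geq \beta$: it guarantees that the arithmetic progression $A_{j+\gamma k}$ of step $\gamma$ extends at least up to the starting point $\beta(j+\gamma(k+1))$ of the next slice, so that consecutive slices overlap within $\eqclass{\beta j}{\gamma}$. Were this hypothesis to fail, the sets $A_{j+\gamma k}$ would be pairwise disjoint and the reverse inclusion would break down, in line with Corollary \ref{crllr:injective}.
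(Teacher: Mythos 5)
Your proof is correct, but it takes a genuinely different route from the paper. The paper proceeds by induction on the number of slices: it shows that each $A_{j+\gamma k}$ is exactly $\eqclass{\beta j}{\gamma}$ intersected with an interval of length $\gamma m$, and that the hypothesis $m \geq \beta$ forces each new interval to overlap the union of the previous ones, so the accumulated union stays of the form ``congruence class intersected with a single interval.'' You instead argue by double inclusion and, for the hard direction, explicitly solve $t = \beta k + z$ with $(k,z) \in \lr{0}{K} \times \lr{0}{m}$ via the clamped division $k = \min(K, \floor{t/\beta})$; the two cases check out ($z \in \lr{0}{\beta-1} \subseteq \lr{0}{m}$ when the floor is not clamped, and $\beta \leq z \leq m$ when it is, using $t \geq \beta(K+1)$ in the latter case). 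Both arguments hinge on the same inequality $m \geq \beta$ --- in the paper it guarantees that consecutive slices overlap, in yours it guarantees the remainder $z$ fits in $\lr{0}{m}$ --- and your closing paragraph correctly identifies this as the crux. Your construction is more explicit (it exhibits the witness pair directly, which could be useful elsewhere), while the paper's induction tracks the geometric ``tiling'' picture of Figure~\ref{fig:ex_inter} more closely; neither is more general. One cosmetic remark: you invoke the standing hypothesis as $m > \beta$, but $m \geq \beta$ is all you use and all that is needed, which matches how the paper's own proof cites it.
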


\begin{proof}
Let $j$ be in $\lr{0}{\gamma - 1}$. 
For all $q$ in $\lr{0}{\floor{\frac{n-j}{\gamma}}}$, let us define the predicate $P_q$ as follows:
\begin{equation*}
P_q \equiv 
\left(
\bigcup_{k = 0}^{q} A_{j + \gamma k}
= \eqclass{\beta j}{\gamma} \cap \lr{\beta j}{\beta(j + \gamma q)  + \gamma m }
\right)
\end{equation*}
and let us prove by induction that $P_q$ holds for all $q$ in $\lr{0}{\floor{\frac{n-j}{\gamma}}}$. 

By definition, we have:
\[
\bigcup_{k = 0}^{0} A_{j + \gamma k}
= A_j
= \eqclass{\beta j}{\gamma} \cap \lr{\beta j}{\beta j  + \gamma m }
\]
and thus $P_0$ holds. 

Let $q$ be in $\lr{0}{\floor{\frac{n-j}{\gamma}} - 1}$ and let us assume that $P_q$ holds. 

By definition, we have:
\begin{eqnarray*}
A_{j + \gamma(q+1)} &=& \eqclass{\beta (j + \gamma (q+1))}{\gamma} \cap \lr{\beta (j + \gamma(q+1))}{\beta (j + \gamma(q+1))  + \gamma m } \nonumber \\ 
&=& \eqclass{\beta j}{\gamma} \cap \lr{\beta (j + \gamma(q+1))}{\beta (j + \gamma(q+1))  + \gamma m }
\end{eqnarray*}

This equation, combined with the assumption that $P_q$ holds, yields us:
\begin{equation}
\label{eq:contiguous}
\begin{aligned}
\bigcup_{k = 0}^{q + 1} A_{j + \gamma k} = 
\eqclass{\beta j}{\gamma} \cap 
\Big( &
\lr{\beta j}{\beta(j + \gamma q)  + \gamma m } \\
\cup &
\lr{\beta (j + \gamma(q+1))}{\beta (j + \gamma(q+1))  + \gamma m }
\Big)
\end{aligned}
\end{equation}

However, as $m \geq \beta$, we know that:
\[ \beta(j + \gamma q)  + \gamma m \geq \beta (j + \gamma(q+1)) \]

And so Equation (\ref{eq:contiguous}) becomes:
\begin{equation*}
\bigcup_{k = 0}^{q + 1} A_{j + \gamma k} = 
\eqclass{\beta j}{\gamma} \cap 
\lr{\beta j}{\beta (j + \gamma(q+1))  + \gamma m }
\end{equation*}
which means that $P_{q+1}$ holds, which enables us to conclude the induction. 

\end{proof}

\begin{lemma}
\label{lmm:bj_inc}
For all $j$ in $\lr{0}{\gamma - 1}$, we have:
\begin{equation}
\eqclass{\beta j}{\gamma} \cap \lr{\beta \gamma}{\beta n + \gamma m - \beta \gamma}
\subseteq B_j
\end{equation}
\end{lemma}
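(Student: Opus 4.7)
The plan is to invoke Lemma \ref{lmm:bj} to rewrite $B_j$ as $\eqclass{\beta j}{\gamma} \cap \lr{\beta j}{\beta(j + \gamma \floor{\frac{n-j}{\gamma}}) + \gamma m}$, and then show that any element of $\eqclass{\beta j}{\gamma} \cap \lr{\beta \gamma}{\beta n + \gamma m - \beta \gamma}$ lies in this intersection. Since both sides share the factor $\eqclass{\beta j}{\gamma}$, the task reduces to checking the interval containment $\lr{\beta \gamma}{\beta n + \gamma m - \beta \gamma} \subseteq \lr{\beta j}{\beta(j + \gamma \floor{\frac{n-j}{\gamma}}) + \gamma m}$, i.e.\ verifying the two bounds separately.

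First, I would handle the lower bound. Since $j$ ranges in $\lr{0}{\gamma - 1}$ and $\beta > 0$, we have $\beta j \leq \beta(\gamma - 1) < \beta \gamma$, which gives $\beta j \leq \beta \gamma$ as required. This part is immediate and requires no further machinery.

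The upper bound is where the content lies. I would aim to show
\[
\beta n + \gamma m - \beta \gamma \leq \beta\bigl(j + \gamma \floor{\tfrac{n-j}{\gamma}}\bigr) + \gamma m,
\]
which after cancelling $\gamma m$ and dividing by $\beta > 0$ is equivalent to $n - j - \gamma \leq \gamma \floor{\tfrac{n-j}{\gamma}}$, i.e.\ $\tfrac{n - j}{\gamma} - 1 \leq \floor{\tfrac{n-j}{\gamma}}$. This is just the defining inequality $x - 1 < \floor{x}$ for the floor function, so the bound holds.

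The main obstacle is really only bookkeeping: making sure the relevant hypotheses from the preceding assumption ($n \geq \gamma$ and $m \geq \beta$) have already been used in Lemma \ref{lmm:bj} so that the reformulation of $B_j$ is legitimate, and that the element $o$ is indeed an integer congruent to $\beta j$ modulo $\gamma$ so we may freely reason about it via the floor-function inequality. Once Lemma \ref{lmm:bj} is invoked, the proof reduces to a short calculation with no combinatorial content.
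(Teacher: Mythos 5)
Your proposal is correct and follows essentially the same route as the paper's proof: both invoke Lemma \ref{lmm:bj} to rewrite $B_j$, reduce the claim to the interval containment, dispose of the lower bound trivially, and establish the upper bound by reducing it to the floor-function inequality $\frac{n-j}{\gamma} \leq \floor{\frac{n-j}{\gamma}} + 1$. No meaningful differences to report.
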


\begin{proof}
Let $j$ be in $\lr{0}{\gamma - 1}$. 
Lemma \ref{lmm:bj} ensures that we have:
\begin{equation*}
B_j = \eqclass{\beta j}{\gamma} \cap \lr{\beta j}{\beta(j + \gamma \floor{\frac{n-j}{\gamma}})  + \gamma m }
\end{equation*}

First, we know that $\beta j \leq \beta \gamma$. 
Let us now define $S$ as the following statement:
\begin{equation}
\label{eq:boundsup}
S\colon \beta n + \gamma m - \beta \gamma
\leq \beta(j + \gamma \floor{\frac{n-j}{\gamma}})  + \gamma m
\end{equation}

We have the following equivalences:
\begin{eqnarray}
S &\iff& \beta n - \beta \gamma \nonumber
\leq \beta(j + \gamma \floor{\frac{n-j}{\gamma}}) \\
&\iff& n - \gamma
\leq j + \gamma \floor{\frac{n-j}{\gamma}} \nonumber \\
&\iff& n - j
\leq \gamma \floor{\frac{n-j}{\gamma}} + \gamma \nonumber \\
&\iff& \frac{n-j}{\gamma}
\leq \floor{\frac{n-j}{\gamma}} + 1 \label{eq:floortrue}
\end{eqnarray}

By definition of the floor function, Equation (\ref{eq:floortrue}) holds and by equivalence, Equation (\ref{eq:boundsup}) thus also holds, and so:
\[ 
\lr{\beta \gamma}{\beta n + \gamma m - \beta \gamma } 
\subseteq
\lr{\beta j}{\beta(j + \gamma \floor{\frac{n-j}{\gamma}})  + \gamma m } 
\]
and intersecting with $\eqclass{\beta j}{\gamma}$ yields us the expected result. 
\end{proof}

\begin{theorem}
\label{thm:middle_inc}
We have:
\begin{equation}
\lr{\beta \gamma}{\beta n + \gamma m - \beta \gamma} \subseteq A
\end{equation}
\end{theorem}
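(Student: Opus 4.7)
The plan is to reduce this containment to Lemma \ref{lmm:bj_inc} by exhausting the residues modulo $\gamma$. The statement is a union-of-residue-classes argument: every integer in the target interval lies in some residue class $\eqclass{\beta j}{\gamma}$, and for each such class Lemma \ref{lmm:bj_inc} already places the corresponding portion of the interval inside $B_j \subseteq A$.

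Concretely, I would take an arbitrary $o$ in $\lr{\beta \gamma}{\beta n + \gamma m - \beta \gamma}$ and apply Recall \ref{rcll:gengroup} with $p = \beta$ and $q = \gamma$ (legitimate because $\beta$ and $\gamma$ are coprime) to obtain some $j \in \lr{0}{\gamma - 1}$ such that $o \in \eqclass{\beta j}{\gamma}$. Then $o$ belongs to $\eqclass{\beta j}{\gamma} \cap \lr{\beta \gamma}{\beta n + \gamma m - \beta \gamma}$, which by Lemma \ref{lmm:bj_inc} is contained in $B_j$. Since $B_j \subseteq A$ by Definition \ref{def:ab}, we conclude $o \in A$, giving the desired inclusion.

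There is no real obstacle here: the heavy lifting has already been done in Lemma \ref{lmm:bj} (establishing the explicit form of $B_j$), Lemma \ref{lmm:bj_inc} (showing that this form covers the full residue class on the target range, using the elementary floor estimate $\frac{n-j}{\gamma} \leq \floor{\frac{n-j}{\gamma}} + 1$), and Recall \ref{rcll:gengroup} (which supplies exactly the coverage of $\Z$ by the $\gamma$ classes $\eqclass{\beta j}{\gamma}$). The theorem is essentially a one-line corollary that glues these pieces together via $A = \bigcup_{j=0}^{\gamma-1} B_j$.
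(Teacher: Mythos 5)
Your proposal is correct and uses exactly the same ingredients as the paper's proof ($A = \bigcup_j B_j$, Lemma \ref{lmm:bj_inc}, and Recall \ref{rcll:gengroup}); the only difference is that you argue element-wise while the paper phrases the same reasoning as a chain of set inclusions, distributing the intersection with $\lr{\beta \gamma}{\beta n + \gamma m - \beta \gamma}$ over the union of residue classes. This is a presentational difference only, not a different route.
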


\begin{proof}
By definition, we have:
\[A = \bigcup_{j \in \lr{0}{\gamma - 1}} B_j\]

Lemma \ref{lmm:bj_inc} thus implies:
\begin{eqnarray*}
A &\supseteq&
\bigcup_{j \in \lr{0}{\gamma - 1}}
\Big(
\eqclass{\beta j}{\gamma} \cap \lr{\beta \gamma}{\beta n + \gamma m - \beta \gamma}
\Big) \\
&\supseteq& 
\left(
\bigcup_{j \in \lr{0}{\gamma - 1}}
\eqclass{\beta j}{\gamma} 
\right) 
\cap \lr{\beta \gamma}{\beta n + \gamma m - \beta \gamma}
\end{eqnarray*}

But as $\beta$ and $\gamma$ are coprime, Recall \ref{rcll:gengroup} ensures that:
\[ \bigcup_{j \in \lr{0}{\gamma - 1}}
\eqclass{\beta j}{\gamma} = \Z \]
which concludes the proof of the Theorem. 
\end{proof}

\begin{theorem}
\label{thm:cardinal}
We have:
\begin{equation}
N_O = |A| = \beta n + \gamma m + 1 - (\beta-1)(\gamma - 1)
\end{equation}
\end{theorem}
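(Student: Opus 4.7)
The plan is to express $A$ as a disjoint union over three consecutive intervals of $\lr{0}{\beta n + \gamma m}$ and apply the three previously established results to each piece. Since $A \subseteq \lr{0}{\beta n + \gamma m}$ and the operating assumption gives $n \geq \gamma$ and $m \geq \beta$ (so in particular $\beta \gamma \leq \beta n + \gamma m - \beta \gamma$), the three intervals
\[
\lr{0}{\beta\gamma - 1}, \quad \lr{\beta\gamma}{\beta n + \gamma m - \beta\gamma}, \quad \lr{\beta n + \gamma m - \beta\gamma + 1}{\beta n + \gamma m}
\]
partition the ambient range, so $|A|$ equals the sum of $|A|$ intersected with each.

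First, I would invoke Theorem~\ref{thm:count_first} to get that the left piece contributes $\beta\gamma - \frac{(\beta-1)(\gamma-1)}{2}$. Next, I would use Theorem~\ref{thm:middle_inc}, which tells us that the entire middle interval is contained in $A$, so its contribution is exactly $|\lr{\beta\gamma}{\beta n + \gamma m - \beta\gamma}| = \beta n + \gamma m - 2\beta\gamma + 1$. Finally, Corollary~\ref{crllr:count_last} gives the right piece a contribution of $\beta\gamma - \frac{(\beta-1)(\gamma-1)}{2}$, matching the left piece by symmetry.

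Summing the three contributions,
\[
|A| = 2\left(\beta\gamma - \frac{(\beta-1)(\gamma-1)}{2}\right) + \beta n + \gamma m - 2\beta\gamma + 1,
\]
and the $2\beta\gamma$ terms cancel, leaving $|A| = \beta n + \gamma m + 1 - (\beta-1)(\gamma-1)$, which is the claimed formula.

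There is no real obstacle here: the combinatorial heavy lifting was already done in Theorem~\ref{thm:count_first} (counting the unique-representation prefix), Lemma~\ref{lmm:symmetric}/Corollary~\ref{crllr:count_last} (transporting that count to the suffix by the central symmetry of $A$), and Theorem~\ref{thm:middle_inc} (establishing full density in the middle). The only thing to watch is ensuring the three intervals are genuinely disjoint and cover $\lr{0}{\beta n + \gamma m}$, which uses precisely the standing assumption $n \geq \gamma \wedge m \geq \beta$ to guarantee $\beta\gamma \leq \beta n + \gamma m - \beta\gamma + 1$; everything else reduces to arithmetic.
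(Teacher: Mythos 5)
Your proof is correct and follows essentially the same route as the paper: the paper likewise partitions $\lr{0}{\beta n + \gamma m}$ into the same three intervals $I_1, I_2, I_3$, applies Theorem~\ref{thm:count_first}, Theorem~\ref{thm:middle_inc} and Corollary~\ref{crllr:count_last} to each piece, and sums. Your added remark that the standing assumption $n \geq \gamma \wedge m \geq \beta$ guarantees the intervals are well-ordered is a small point the paper leaves implicit, but it changes nothing of substance.
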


\begin{proof}
Let us define the intervals $I_1, I_2$ and $I_3$ as follows:
\begin{eqnarray*}
I_1 &=& \lr{0}{\beta \gamma - 1} \\
I_2 &=& \lr{\beta \gamma}{\beta n + \gamma m - \beta \gamma} \\
I_3 &=& \lr{\beta n + \gamma m - \beta \gamma + 1}{\beta n + \gamma m}
\end{eqnarray*}

We notice that $(I_1, I_2, I_3)$ forms a partition of $\lr{0}{\beta n + \gamma m}$ and thus we can partition $A$ into $(A \cap I_1, A \cap I_2, A \cap I_3)$, which enables us to express the cardinal of $A$ as the following sum:
\begin{equation}
\label{eq:card_sum}
|A| = |A \cap I_1| + |A \cap I_2| + |A \cap I_3|
\end{equation}

We already know by Theorem \ref{thm:count_first} and Corollary \ref{crllr:count_last} that:
\begin{equation*}
|A \cap I_1| = |A \cap I_3| = \beta \gamma - \frac{(\beta - 1)(\gamma -1)}{2}
\end{equation*}

Moreover, Theorem \ref{thm:middle_inc} ensures that:
\[ A \cap I_2 = I_2 \]
and thus:
\begin{eqnarray*}
|A \cap I_2| &=& |I_2| \\
&=& \beta n + \gamma m - 2 \beta \gamma + 1
\end{eqnarray*}
and finally Equation (\ref{eq:card_sum}) becomes:
\begin{eqnarray*}
|A| &=& \beta n + \gamma m - 2 \beta \gamma + 1 + 2 \left(\beta \gamma - \frac{(\beta - 1)(\gamma -1)}{2} \right) \\
&=& \beta n + \gamma m + 1 - (\beta - 1)(\gamma -1)
\end{eqnarray*}
\end{proof}

In the following corollary, we can now synthesise the previous results and realize one of our main objectives: a closed-form expression for $\HH(Y \mid O)$ under uniform prior beliefs. 
\begin{corollary}
\label{crll:h_explicit}
We consider a function $f$ defined as $f(y, z) = \alpha + \beta y + \gamma z$ with $\alpha$, $\beta$ and $\gamma$ being three constant integer values, with non-zero $\beta$ and $\gamma$. 
We assume that $Y$ and $Z$ are ranged in the respective intervals $I_Y$ and $I_Z$ of size $(n+1)$ and $(m+1)$ respectively, and we assume that $Y$ and $Z$ are uniformly distributed on those intervals. 
Let $d$ be the greatest common divisor of $\beta$ and $\gamma$, and let us define $\beta' = | \frac{\beta}{d} |$ and $\gamma' = | \frac{\gamma}{d} |$ where here $|x|$ represents the absolute value of integer $x$. 
Then, we have:
\begin{equation}
\HH(Y \mid O) = - \log \frac{\beta' n + \gamma' m + 1 - (\beta'-1)(\gamma' - 1)}{(n+1)(m+1)} 
\end{equation}
\end{corollary}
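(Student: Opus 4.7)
The plan is to chain the four normalisations of Assumption 3 with the cardinality count of Theorem \ref{thm:cardinal} via Equation (\ref{eq:vuln_uni}) and the identity $\HH(Y \mid O) = -\log \V(Y \mid O)$. Under the uniform priors hypothesis, Equation (\ref{eq:vuln_uni}) already gives $\V(Y \mid O) = N_O/(|I_Y|\cdot|I_Z|) = N_O/((n+1)(m+1))$, so only the computation of $N_O$ in closed form remains.

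To reach the setting in which Theorem \ref{thm:cardinal} applies, I would apply the four reductions of Assumption 3 in order to $f(y,z) = \alpha + \beta y + \gamma z$: (i) subtract the additive constant $\alpha$; (ii) translate $Y$ and $Z$ so that their ranges become $\lr{0}{n}$ and $\lr{0}{m}$, which corresponds to a further additive shift of the output; (iii) if needed, replace $Y$ by $n-Y$, replace $Z$ by $m-Z$, and/or negate the output globally, to make both coefficients positive; and (iv) factor out $d = \gcd(\beta,\gamma)$, a multiplicative relabelling of the outputs by the nonzero integer $d$. Each of these steps is a bijection between the two output domains (an additive shift, an interval involution, a sign change, or a nonzero scaling), so none of them alters the cardinality $N_O$. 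What remains is the normalised problem $f'(y,z) = \beta' y + \gamma' z$ with positive coprime $\beta', \gamma'$ and $y \in \lr{0}{n}$, $z \in \lr{0}{m}$, to which Theorem \ref{thm:cardinal} applies and delivers the count $N_O = \beta' n + \gamma' m + 1 - (\beta'-1)(\gamma'-1)$.

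Substituting this value into $\V(Y \mid O) = N_O/((n+1)(m+1))$ and taking $-\log$ yields exactly the displayed expression. The only real obstacle lies in step (iii): one has to check carefully that substitutions of the form $Y \mapsto n-Y$ really do induce a bijection on uniform $\lr{0}{n}$ (and hence on the output set) and that global negation of the output is itself a bijection on $\Z$; this bookkeeping is routine but must be honest because both the input distributions and the output multiset have to be preserved simultaneously. A secondary point to watch is the hypothesis $n \geq \gamma' \wedge m \geq \beta'$ underlying Theorem \ref{thm:cardinal}: in the complementary regime one should instead invoke Corollary \ref{crllr:injective}, which yields $N_O = (n+1)(m+1)$, and either verify that the displayed formula coincides with this value or restrict the closed form to the regime where Theorem \ref{thm:cardinal} is in force.
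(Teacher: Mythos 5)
Your proof follows essentially the same route as the paper: the paper's own proof is the one-line remark that the corollary is an immediate consequence of Theorem \ref{thm:cardinal} and Equation (\ref{eq:vuln_uni}), with the four normalisations of the preceding assumption silently doing the work of reducing to the zero-based, positive, coprime setting. Your explicit bookkeeping of the bijections in steps (i)--(iv) is correct and merely spells out what the paper leaves implicit.

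Your secondary worry about the hypothesis $n \geq \gamma' \wedge m \geq \beta'$ is well-founded, and the resolution is the restrictive one: the displayed formula does \emph{not} in general agree with $(n+1)(m+1)$ in the complementary regime. For instance, with $\beta'=3$, $\gamma'=4$, $n=m=1$ one has $A=\{0,3,4,7\}$, so $N_O=(n+1)(m+1)=4$, whereas $\beta' n+\gamma' m+1-(\beta'-1)(\gamma'-1)=8-6=2$. (That the two expressions happen to coincide in the paper's second example, where $\beta'=3$, $\gamma'=4$, $n=3$, $m=6$, is an accident of those parameters.) Hence the closed form must be restricted to the regime $n \geq \gamma'$ and $m \geq \beta'$ in which Theorem \ref{thm:cardinal} is actually proved, with Corollary \ref{crllr:injective} supplying $N_O=(n+1)(m+1)$ otherwise. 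The corollary as printed, and its one-line proof, omit this caveat, so the gap you flag is real --- but it is a gap in the paper's statement rather than in your argument, which correctly identifies where the case split is needed.
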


\begin{proof}
This is an immediate consequence of Theorem \ref{thm:cardinal} and Equation (\ref{eq:vuln_uni}). 
\end{proof}

This gives us a method for quantifying the information leaks about a targeted party from the public output of an SMC under uniform prior beliefs. This method requires a computational time that is constant in the inputs size and logarithmic in the coefficients of the affine function due to the greatest common divisor operation. 
In the next section, we show how we can reason about $\HH(Y \mid O)$ when an attacker has some prior beliefs about $Y$ and $Z$. 

\section{Privacy bounds under non-uniform prior beliefs}
\label{sec:non_uni}

In this section, we present some lower and upper bounds for $\HH(Y \mid O)$ under non-uniform prior beliefs on the target's and the spectator's input. 
The following theorem first imposes a lower bound.

\begin{theorem}
\label{thm:lower_bound}
We have:
\begin{equation}
\HH(Y\mid O) \geq \HH(Y) + \HH(Z) - \log(N_O)
\end{equation}
with equality when $Y$ and $Z$ have uniform prior distributions. 
\end{theorem}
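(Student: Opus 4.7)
The plan is to work directly with the Bayes vulnerability expression from Equation (\ref{eq:vuln_dry}) and produce a multiplicative upper bound of the form $\V(Y \mid O) \leq N_O \cdot \max_y p(y) \cdot \max_z p(z)$; taking $-\log$ of both sides then immediately yields the additive lower bound on $\HH(Y \mid O)$, since $\HH(Y) = -\log \max_y p(y)$ and $\HH(Z) = -\log \max_z p(z)$ by definition of min-entropy.

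The key step is bounding $p(o \mid y)$. As already argued in Section \ref{subsec:reduce}, independence of $Y$ and $Z$ combined with the fact that $\gamma \neq 0$ ensures that for a fixed $y$ and a fixed $o$, there is at most one $z \in I_Z$ satisfying $f(y, z) = o$. Consequently
\begin{equation*}
p(o \mid y) \;=\; \sum_{\substack{z \in I_Z \\ f(y,z) = o}} p(z) \;\leq\; \max_z p(z).
\end{equation*}
Plugging this into the vulnerability expression gives
\begin{equation*}
\V(Y \mid O) \;=\; \sum_{o \in D_O} \max_y \bigl( p(y) \cdot p(o \mid y) \bigr)
\;\leq\; \sum_{o \in D_O} \max_y p(y) \cdot \max_z p(z)
\;=\; N_O \cdot \max_y p(y) \cdot \max_z p(z),
\end{equation*}
and applying $-\log$ to both sides yields the claimed inequality $\HH(Y \mid O) \geq \HH(Y) + \HH(Z) - \log(N_O)$.

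For the equality case under uniform priors, we substitute $p(y) = 1/(n+1)$ and $p(z) = 1/(m+1)$. Then for each $o \in D_O$, by definition of $D_O$ there exists some $(y_o, z_o) \in I_Y \times I_Z$ with $f(y_o, z_o) = o$, and for that particular $y_o$ we have $p(o \mid y_o) = p(z_o) = 1/(m+1)$; hence $\max_y p(y) \cdot p(o \mid y) = 1/\bigl((n+1)(m+1)\bigr)$, which recovers Equation (\ref{eq:vuln_uni}) and makes the bound tight. The main (very mild) obstacle is just being careful that the single-$z$ argument relies on $\gamma \neq 0$, which is exactly what our standing assumption secures; beyond that the proof is a one-line estimate.
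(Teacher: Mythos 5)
Your proposal is correct and follows essentially the same route as the paper: both bound $\V(Y\mid O)$ above by $N_O \cdot \max_y p(y)\cdot \max_z p(z)$ (the paper via the $1$-norm/$\infty$-norm comparison plus the identity $\max_o p(o\mid y)=\max_z p(z)$, you via the pointwise bound $p(o\mid y)\leq \max_z p(z)$, which rests on the same at-most-one-$z$ observation), and both settle the equality case by recovering Equation~(\ref{eq:vuln_uni}) under uniform priors.
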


\begin{proof}
We denote by $N_O$ the number of possible outputs, for which we recall that Theorem \ref{thm:cardinal} yields an explicit formula. 
By comparison between the $1$-norm and the infinity-norm, we have:
\begin{eqnarray*}
\V(Y \mid O) &=& \sum_o \max_y p(y) \cdot p(o \mid y) \\
&\leq& N_O \max_o \max_y p(y) \cdot p(o \mid y) \\
&\leq& N_O \max_y p(y) \cdot \left( \max_o p(o \mid y) \right)
\end{eqnarray*}
But we know that:
\[ \forall y \in I_Y \colon \max_o p(o \mid y) = \max_z p(z) \]
and thus:
\begin{eqnarray*}
\V(Y \mid O) &\leq& N_O \max_y p(y) \cdot \max_z p(z)
\end{eqnarray*}
Taking the negative logarithm concludes the proof, and we can verify, with our bespoke explicit formula from Equation (\ref{eq:vuln_uni}), that we indeed have equality when $Y$ and $Z$ are uniformly distributed since then $\max_y p(y) = \frac{1}{|I_Y|}$ and $\max_z p(z) = \frac{1}{|I_Z|}$. 
\end{proof}

We will now study some upper bounds for $\HH(Y \mid O)$. It is a known result on the min-entropy that $\HH(Y \mid O) \leq \HH(Y)$, i.e. that knowledge of the public output cannot increase the targeted input's entropy. 
We will now prove that $\HH(Y \mid O) \leq \HH(Z)$, i.e. the remaining entropy of $Y$ given knowledge of $O$ cannot be larger than the prior entropy of the spectator's input $Z$. 
To this end, we first state in the next theorem that an attacker eavesdropping on the value of $x$ and learning the public output will gain the same amount of information on targeted input $Y$ than on the spectator's input $Z$. 

\begin{theorem}
\label{thm:equalityYZ}
We have:
\begin{equation}
\HH(Y \mid O) = \HH(Z \mid O)
\end{equation}
\end{theorem}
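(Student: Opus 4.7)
The plan is to prove this equality directly from the definition of min-entropy, by showing that the Bayes vulnerabilities $\V(Y\mid O)$ and $\V(Z\mid O)$ coincide. Both expressions are sums over $o$ of a maximum of a product of marginal probabilities and a conditional likelihood; the idea is to rewrite each summand as a maximum over the set of preimage pairs $(y,z)$ of $o$ and observe that this set is the same on both sides.

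First I would fix an output $o$ and inspect $\max_y p(y)\cdot p(o\mid y)$. For a fixed $y\in I_Y$, since $\gamma\neq 0$ and $f$ is affine, the equation $\alpha+\beta y+\gamma z = o$ admits at most one solution $z\in I_Z$. Using the independence of $Y$ and $Z$, this gives $p(o\mid y) = p(z)$ when that unique $z$ exists and belongs to $I_Z$, and $p(o\mid y)=0$ otherwise. Hence
\[
\max_{y\in I_Y} p(y)\cdot p(o\mid y) \;=\; \max_{(y,z)\in I_Y\times I_Z,\ \alpha+\beta y+\gamma z = o} p(y)\,p(z).
\]
The same argument, now using $\beta\neq 0$ to solve for $y$ given $z$, yields
\[
\max_{z\in I_Z} p(z)\cdot p(o\mid z) \;=\; \max_{(y,z)\in I_Y\times I_Z,\ \alpha+\beta y+\gamma z = o} p(y)\,p(z),
\]
which is exactly the same quantity.

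Summing over $o\in D_O$ then gives $\V(Y\mid O)=\V(Z\mid O)$, and applying $-\log$ on both sides concludes $\HH(Y\mid O)=\HH(Z\mid O)$. There is no real obstacle here: the entire argument rests on the combination of two facts already established or assumed earlier in the paper, namely the independence of $Y$ and $Z$ and the fact that both $\beta$ and $\gamma$ are non-zero, which makes the correspondence between $y$ and $z$ bijective once $o$ is fixed. The only mild care needed is to track when the unique solution falls outside the admissible interval, but this contributes zero to both maxima and therefore does not affect the equality.
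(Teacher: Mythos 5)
Your proof is correct and follows essentially the same route as the paper: both arguments rewrite each summand $\max_y p(y)\,p(o\mid y)$ and $\max_z p(z)\,p(o\mid z)$ as a maximum of $p(y)\,p(z)$ over the set of preimage pairs of $o$, using the uniqueness of $z$ given $(o,y)$ (from $\gamma\neq 0$) and of $y$ given $(o,z)$ (from $\beta\neq 0$) together with the independence of $Y$ and $Z$. Your explicit remark that solutions falling outside the admissible intervals contribute zero to both maxima is a point the paper handles implicitly via the projections $S^o_Y$ and $S^o_Z$, but the substance is identical.
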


\begin{proof}
We have:
\[ \HH(Y \mid O) = \sum_o \max_y p(y) p(o \mid y) \]

For all $o$ in $D_O$, we define the set $S^o$ of pairs that result in output $o$ as:
\[ S^o = \{ (y, z) \in I_Y \times I_Z \mid \beta y + \gamma z = o \} \]

We define its projections $S^o_Y$ and $S^o_Z$ on its first and second components respectively as follows:
\begin{eqnarray*}
S^o_Y &=& \{ y \in I_Y \mid \exists z \in I_Z \colon (y, z) \in S^o \} \\
S^o_Z &=& \{ z \in I_Z \mid \exists y \in I_Y \colon (y, z) \in S^o \}
\end{eqnarray*}

Moreover, for all $o$ in $D_O$ and $y$ in $D_Y$, we know that $p(y \mid o)$ is non zero only if there exists a $z$ in $D_Z$ such that $(y, z)$ is in $S_o$. Thus:

\[ \HH(Y \mid O) = \sum_o \max_{y \in S^o_Y} p(y) p(o \mid y) \]

Now, for all $o$ in $D_O$ and $y$ in $S^o_Y$, there exists a unique $z$ in $D_Z$ such that $\beta y + \gamma z = o$, determined by $z = \frac{o - \beta y}{\gamma}$ and we have $p(o \mid y) = p(z)$. Thus:

\[ \HH(Y \mid O) = \sum_o \max_{(y, z) \in S^o} p(y) p(z) \]

Conversely, for all $o$ in $D_O$ and $z$ in $S^o_Z$, there exists a unique $y$ satisfying $\beta y + \gamma z = o$ and we have $p(o \mid z)$. Thus: 

\[ \HH(Y \mid O) = \sum_o \max_{z \in S^o_Z} p(o \mid z) p(z) \]

And finally for all $o$ in $D_O$, we know that $p(o \mid z)$ can only be non zero if $z$ is in $S^o_Z$ and thus:
\begin{eqnarray*}
\HH(Y \mid O) &=& \sum_o \max_{z \in D_Z} p(o \mid z) p(z) \\
&=& \HH(Z \mid O)
\end{eqnarray*}
\end{proof}

\begin{corollary}
\label{crll:lower_z}
We have:
\[ \HH(Y \mid O) \leq \HH(Z) \]
\end{corollary}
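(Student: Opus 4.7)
The plan is to combine Theorem~\ref{thm:equalityYZ} with the standard monotonicity property of min-entropy under conditioning. The roles of $Y$ and $Z$ are not symmetric a priori (only $Y$ is the targeted input), but Theorem~\ref{thm:equalityYZ} has just removed that asymmetry for the posterior min-entropy: conditioning on $O$ leaks exactly the same amount about $Y$ as it does about $Z$. So the argument should go through a detour via $Z$.

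First I would invoke Theorem~\ref{thm:equalityYZ} to rewrite $\HH(Y \mid O) = \HH(Z \mid O)$. Next, I would apply the standard min-entropy inequality $\HH(Z \mid O) \leq \HH(Z)$, which is the analogue of the fact already used earlier in the text for $Y$, namely that conditioning on an additional random variable cannot decrease the Bayes vulnerability, hence cannot increase the min-entropy. Chaining these two relations yields $\HH(Y \mid O) \leq \HH(Z)$, which is exactly the claim.

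The only subtlety worth mentioning explicitly is the justification of $\HH(Z \mid O) \leq \HH(Z)$. This is a one-line computation: $\V(Z \mid O) = \sum_o p(o) \max_z p(z \mid o) \geq \max_z \sum_o p(o) p(z \mid o) = \max_z p(z) = \V(Z)$, where the inequality is the standard fact that a weighted average of maxima dominates the maximum of weighted averages. Taking $-\log$ flips the inequality and gives the desired bound. I would either cite this as a known property of min-entropy (it is already used implicitly in the discussion preceding Theorem~\ref{thm:equalityYZ}) or include this short chain of equalities for completeness.

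I do not anticipate any real obstacle here: the corollary is a genuine two-line consequence of the preceding theorem and of a standard fact, and the only thing to be careful about is to avoid writing $\HH(Y \mid O) \leq \HH(Y)$ and $\HH(Y) \leq \HH(Z)$, which would require comparing prior entropies that are unrelated in general; the detour through $\HH(Z \mid O)$ is what makes the argument work.
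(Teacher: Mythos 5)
Your proof is correct and follows exactly the paper's argument: apply Theorem~\ref{thm:equalityYZ} to get $\HH(Y \mid O) = \HH(Z \mid O)$ and then the standard monotonicity $\HH(Z \mid O) \leq \HH(Z)$, which the paper also cites as a known fact. Your extra one-line verification of that monotonicity via Bayes vulnerability is a welcome addition but does not change the route.
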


\begin{proof}
We have $\HH(Z\mid O) \leq \HH(Z)$ and Theorem \ref{thm:equalityYZ} concludes the proof. 
\end{proof}

\section{Examples}
\label{sec:examples}

In this section, we illustrate the theoretical results previously obtained. 
We begin this section by presenting an example which deepens our understanding of the behaviour of $\HH(Y \mid O)$ under non-uniform prior beliefs. 
We could intuitively posit that $\HH(Y \mid O)$ is maximal when the prior distributions for $Y$ and $Z$ are uniform. However, we refute this hypothesis in the following example. 

\begin{example}
Let us consider the function $f(y, z) = o = 2y + 3z$, and let us assume that $y$ in ranged in $I_Y = \lr{0}{2}$ and $z$ is ranged in $I_Z = \lr{0}{1}$. 
Let us consider $\pi^u_Y = \{ 0:\frac{1}{3}, 1:\frac{1}{3}, 2:\frac{1}{3} \}$ and $\pi^u_Z = \{ 0:\frac{1}{2}, 1:\frac{1}{2} \}$ the uniform distributions for $Y$ and $Z$ on their respective domains. We also consider the following particular distribution $\pi_Y^* = \{ 0:\frac{1}{2}, 1:0, 2:\frac{1}{2} \}$. 

Then, when $Y$ and $Z$ respectively follow the prior distributions $\pi_Y^u$ and $\pi_Z^u$, we have $\HH^u(Y \mid O) = -\log(\frac{5}{6})$. 
On the other hand, when $Y$ and $Z$ respectively follow the prior distributions $\pi_Y^*$ and $\pi_Z^u$, we get $\HH^*(Y \mid O) = -\log(\frac{3}{4})$. 

We thus have $\HH^*(Y \mid O) > \HH^u(Y \mid O)$ which contradicts the intuitive hypothesis. 
\end{example}

The next example presents a use case of Corollary \ref{crll:h_explicit}. 

\begin{example}
\label{ex:uni}
Let us consider three parties $\cX$, $\cY$ and $\cZ$ holding respective private inputs $x$, $y$ and $z$ and willing to enter the secure computation of a public function $f$ defined as $f(x, y, z) = (3x-6)y + (x^2-2x+6)z$. 
We suppose that party $\cX$ is attacking input $y$ under spectator $\cZ$. We notice that when input $x$ is fixed, the function $f$ is affine in $y$ and $z$ and we can thus apply our privacy analysis. 
We assume that $Y$ and $Z$ are ranged in the input domain $I = \lr{0}{\pow{5}{12}}$ and we assume that $\cX$'s prior beliefs $\pY$ and $\pZ$ on those inputs are uniform over $I$. 
We plot in Figure \ref{fig:ex_uni} the values of $\HH(Y \mid O)$ computed via Corollary \ref{crll:h_explicit}, for the values of input $x$ ranged in $\lr{0}{30}$. 
Note that although a small interval for the values of $x$ has been chosen for readability purposes, entropy $\HH(Y \mid O)$ can be computed for any value of $x$. 
For an attacker $\cX$ who is willing to lie on his honest and intended input in order to learn as much information as possible on his targeted input $Y$, he would have more incentive to enter some value $x$ that produces low entropy. For example, he would rather enter value $x=25$ than $x=2$. 
Conversely, targeted parties could consider such information so as to evaluate the risk that they would face by entering the computation in the worst case, or on average. 

\begin{figure}
\centering
\begin{tikzpicture}[scale=.9]
	\begin{axis}[
	scale=1,
	 ymin=32,
	 xmin=0,
	 xmax=30,
	  xlabel=$x$,
	  ylabel=entropy,
	  legend pos=north east]
	\addplot[blue] table [y=fx, x=x, mark=none]{uni.dat};
	\addlegendentry{$\HH(Y \mid O)$}
\end{axis}
\end{tikzpicture}
\caption{Illustration of $\HH(Y \mid O)$ in the computation of function $f(x, y, z) = (3x-6)y + (x^2-2x+6)z$ with uniform prior beliefs $\pY$ and $\pZ$ over $\lr{0}{\pow{5}{12}}$, when $X$ attacks $Y$ under spectator $Z$. 
}
\label{fig:ex_uni}
\end{figure}
\end{example}

\begin{example}
In order to evaluate the effectiveness of our approach, we repeated the operations of the previous example while letting the size of the input spaces $I_Y$ and $I_Z$ vary, and by comparing the computational time that different methods require to perform such analyses. 
More precisely, we computed the $31$ values of $\HH(Y \mid O)$ in the same scenario as in the previous Example \ref{ex:uni}, but we let inputs $Y$ and $Z$ be ranged in the intervals $\lr{0}{\pow{5}{p}}$ for different values of $p$. 
We compared the time taken by the three following methods, which we display in Figure \ref{fig:times}.  
\begin{enumerate}
\item \emph{Naive method:} We use the combinatorial formula given in Equation (\ref{eq:awae_naive}) that has complexity $\mathcal{O}(n^2m^2)$. 
\item \emph{Simplified method:} We use the simplified formula of Equation (\ref{eq:vuln_uni}) for affine functions under uniform distributions, where $N_O$ is computed naively by enumerating the set of outputs, which yields complexity $\mathcal{O}(nm)$. 
\item \emph{Explicit method:} We use the result of Corollary \ref{crll:h_explicit} which provides a constant time formula. 
\end{enumerate}

\begin{figure}
\centering
\[
\begin{array}{l|lll}
   p & \text{Naive} & \text{Simplified} & \text{Explicit} \\ \hline
   0 &  \pow{9.1}{-3} & \pow{8.9}{-4} & \pow{1.5}{-4} \\
   1 & \pow{3.7}{-1} & \pow{1.7}{-2} & \pow{9.4}{-5} \\
   2 & 32 & 1.9 & \pow{1.2}{-4} \\
   3 & \infty & 216 & \pow{9.9}{-5} \\
   4 & \infty & \infty & \pow{1.2}{-4} \\
   12 & \infty & \infty & \pow{1.2}{-4} \\
\end{array}
\]
\caption{Computational time (in seconds) taken by our explicit formula compared to naive methods in the privacy analysis of Example \ref{ex:uni} with inputs ranged in $\lr{0}{\pow{5}{p}}$ for different values of $p$. }
\label{fig:times}
\end{figure}

The variables $n$ and $m$ represent the size of the input spaces (minus $1$) and are both set to $\pow{5}{p}$ for varying values of $p$. 
We set a time limit of $5$ minutes and we mark by an infinity sign the computations that timed out. 
We can notice that both naive methods rapidly time out as the input space grows whereas our explicit formula enables us to perform privacy analyses in constant time for arbitrarily large input spaces such as the one performed in Example \ref{ex:uni}. 
Those computations have been performed on an Intel(R) Core(TM) i3-2350M CPU @ 2.30GHz, but are aimed at estimating the order of magnitude of those methods rather than precisely assessing them individually. 

\end{example}

In the following example, we now illustrate the lower and upper bounds that have been derived for $\HH(Y \mid O)$ under non-uniform prior beliefs. 

\begin{example}
We consider the computation of $f$ whose simplification, once $x$ is fixed, is defined as $f(y, z) = y + z$. 
We assume that $Y$ and $Z$ are ranged in the domain $I=\lr{0}{50}$. 

We define a spiked distribution $d_s$ parametrised by a domain $D$, a center $c \in D$ and a weight $w \in [0, 1]$ as:
\begin{eqnarray*}
d_s(D, c, w)[c] &=& w \\
d_s(D, c, w)[x] &=& \frac{1 - w}{|D| - 1} \qquad \text{ if $x \neq c$}
\end{eqnarray*}
In other words, distribution $d_s(D, c, w)$ allocates a probability $w$ to the value $c$ and distributes the remaining probability uniformly amongst the other values of the domain $D$. 

We suppose that $Y$ is uniformly distributed over $I$ and that $Z$ follows distribution $d_s(I, 0, w)$ for different weights $w$. 
We divide the interval $[0,1]$ into $50$ values. For each $w$ in those $50$ values, we compute the exact values of $\HH(Y \mid O)$ and that of its bounds derived in the previous section. 
The value of $\HH(Y \mid O)$ appears in blue in Figure \ref{fig:ex_non_uni}. Its lower bound stemming from Theorem \ref{thm:lower_bound} is drawn in red and its upper bound derived from Corollary \ref{crll:lower_z} is traced in green. 
Note that we considered small input spaces since $\HH(Y \mid O)$ is here calculated with a naive method, although its bounds can be computed efficiently for arbitrarily large input spaces. 

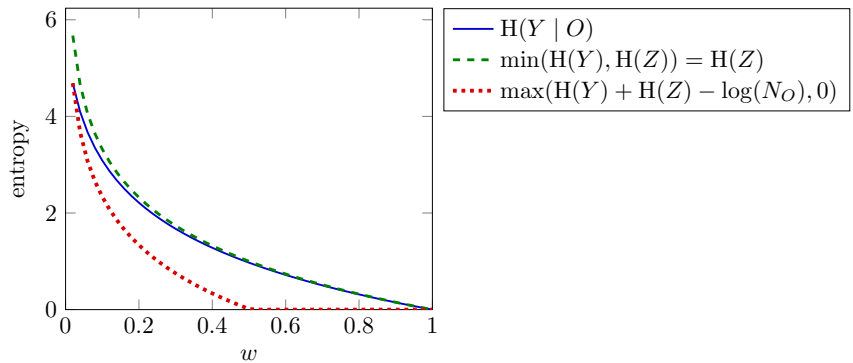
\begin{figure}
\centering
\begin{tikzpicture}[scale=.9]
	\begin{axis}[
	scale=1,
	 ymin=0,
	 xmin=0,
	 xmax=1,
	  xlabel=$w$,
	  ylabel=entropy,
	  legend pos=outer north east,
	  legend cell align={left}]
	\addplot[blue!80!black, thick] table [y=fx, x=x, mark=none]{non_uni.dat};
	\addlegendentry{$\HH(Y \mid O)$}
	\addplot[green!50!black, dashed, very thick] table [y=upper, x=x, mark=none]{non_uni.dat};
	\addlegendentry{$\min(\HH(Y),\HH(Z)) = \HH(Z)$}
	\addplot[red!85!black, dotted, ultra thick] table [y=lower, x=x, mark=none]{non_uni.dat};
	\addlegendentry{$\max(\HH(Y) + \HH(Z) - \log(N_O),0)$}
\end{axis}
\end{tikzpicture}
\caption{Behaviour of $\HH(Y \mid O)$ in the computation of function $f(y, z) = y + z$ for $Y$ and $Z$ ranged in $I=\lr{0}{50}$, with $\pZ = d_s(I, 0, w)$ and $\pY$ uniform over $I$. 
}
\label{fig:ex_non_uni}
\end{figure}

\end{example}

\section{Conclusion}
\label{sec:conclu}

Although extensive researches in Secure Multi-party Computation have considerably improved the efficiency of cryptographic protocols, the quantification of the acceptable leakage is a problem that still requires deeper investigations. 
Indeed, the computational complexity of those recently introduced privacy analyses does not yet allow their application in practical situations that involve large input spaces. 
In this work, we focused our attention on secure three-party computations of affine functions. 
We have formally investigated the behaviour of the acceptable leakage under uniform prior beliefs in order to obtain an explicit formula for the min-entropy of the targeted input given conditional knowledge of the output. 
The calculation of this closed-form expression requires a computational time that is constant in the inputs sizes and logarithmic in the coefficients of the function, which enables the privacy analysis of such computations in practice. 
Finally, we have derived some theoretical bounds for this acceptable leakage when the input prior distributions are non-uniform in order to accommodate the potential prior belief that an attacker may have. 

In the future, we would like to enlarge our understanding of the acceptable leakage in more general settings. 
First, as our work is motivated by the privacy leaks that occur during SMC protocols, we tailored our analyses for finite input spaces. However, it would be interesting to adapt our model and to design some methods that can accommodate continuous input and output spaces. 
Moreover, although our current analysis considers the computation of affine functions for three parties, it would be of interest to explore the computation of affine functions for any number of parties. 

We also mean to investigate more general functions that involve non-linear terms. It would be particularly interesting to study the composition of our analyses of affine functions in order to use them as building blocks for studying more complex functions. 
Finally, efficient and exact quantification of the acceptable leakage for general functions may be hard to obtain simultaneously, and we would thus also be interested in providing efficient methods for approximating the inputs privacy in general scenarios. 

%\textbf{Acknowledgements:} 
%This research was supported by the UK EPSRC grants EP/N020030/1 and EP/N023242/1. 

\addcontentsline{toc}{section}{References}
\bibliographystyle{unsrt}%Used BibTeX style is unsrt
\bibliography{./references}

\end{document}